\newtheorem{theo}[]{{\emph{Theorem}}}
\newtheorem{lemma}[]{{\emph{Lemma}}}
\theoremstyle{remark}
\newtheorem*{remark}{\textbf{Remark}}
\theoremstyle{definition}
\newcommand{\ra}{\rightarrow}
\newcommand{\calF}{\mathcal{F}} 
\newcommand{\bF}{\mathbb{F}}
\newcommand{\cC}{\mathcal{C}}
\newcommand{\Tr}{\mathrm{Tr}}
\newcommand{\al}{\alpha}
\newcommand{\be}{\beta}
\newcommand{\ga}{\gamma}
\newcommand{\veps}{\varepsilon}
\newcommand{\ka}{\kappa}
\newcommand{\s}{\scriptstyle}
\newcommand{\om}{\omega}
\DeclareMathOperator{\Tra}{\mathrm Tr}
\begin{document}

\title{Cyclic Codes and Sequences from Kasami-Welch Functions} \maketitle

\begin{center}
$\mathrm{Jinquan\;\; Luo\qquad\quad San\;\; Ling \quad\qquad
and\qquad Chaoping\;\; Xing}$ \footnotetext{J.Luo is with the School
of Mathematics, Yangzhou University, Jiangsu Province, 225009, China
and with the Division of Mathematics, School of Physics and
Mathematical Sciences, Nanyang Technological University, Singapore.
\par S.Ling and C.Xing are with the Division of Mathematics, School
of Physics and Mathematical Sciences, Nanyang Technological
University, Singapore.
\par \quad E-mail addresses: jqluo@ntu.edu.sg, lingsan@ntu.edu.sg, xingcp@ntu.edu.sg.}
\end{center}
\newpage
 \textbf{Abstract} \par Let $q=2^n$, $0\leq k\leq n-1$ and $k\neq n/2$. In this paper we determine
the value distribution of following exponential sums
\[\sum\limits_{x\in
\bF_q}(-1)^{\Tra_1^n(\alpha x^{2^{3k}+1}+\beta
x^{2^k+1})}\quad(\alpha,\beta\in \bF_{q})\]
 and
\[\sum\limits_{x\in
\bF_q}(-1)^{\Tra_1^n(\alpha x^{2^{3k}+1}+\beta x^{2^k+1}+\ga
x)}\quad(\alpha,\beta,\ga\in \bF_{q})\]

 where $\Tra_1^n: \bF_{2^n}\ra \bF_2$ is the canonical trace
mapping. As applications:
 \begin{itemize}
    \item[(1).]  We determine the weight distribution
of the binary cyclic codes $\cC_1$ and $\cC_2$ with parity-check
polynomials $h_2(x)h_3(x)$ and $h_1(x)h_2(x)h_3(x)$ respectively
where  $h_1(x)$, $h_2(x)$ and $h_3(x)$ are the minimal polynomials
of $\pi^{-1}$, $\pi^{-(2^k+1)}$ and $\pi^{-(2^{3k}+1)}$ respectively
for a primitive element $\pi$ of $\bF_q$.
    \item[(2).]We determine the correlation distribution among
    a family of binary m-sequences. \end{itemize}

\emph{Index terms:}\;Exponential sum, Cyclic code, Sequence, Weight
distribution, Correlation distribution
\newpage
\section{Introduction}

\quad Basic results on finite fields could be found in \cite{Lid Nie}. These notations are fixed throughout this paper except for specific statements.
\begin{itemize}
  \item Let $n$ be a positive integer, $q=2^n$,  $\bF_q$ be the finite field
  of order $q$. Let $\pi$ be a primitive element of $\bF_q$.
  \item Let $\Tra_i^j:\bF_{2^i}\ra\bF_{2^j}$ be the trace mapping,  and
$\chi(x)=(-1)^{\Tra_1^n(x)}$ be the canonical additive character on
$\bF_q$.
  \item Let $k$ be a positive integer, $1\leq k\leq n-1$ and  $k\notin \{\frac{n}{4},\frac{n}{2},\frac{3n}{4}\}$.
  Let $d=\gcd(n,k)$, $q_0=2^d$, and $s=n/d$.
  \item Let $m=n/2$ (if $n$ is even) and $\mu=(-1)^{m/d}$.
\end{itemize}

 For binary cyclic code $\cC$ with length $l$, let $A_i$ be the
number of codewords in $\cC$ with Hamming weight $i$. The weight
distribution $\{A_0,A_1,\cdots,A_l\}$ is an important research
object for both theoretical and application interests in coding
theory. Classical coding theory reveals that the weight of each
codeword can be expressed by binary exponential sums so that the
weight distribution of $\cC$ can be determined if the corresponding
exponential sums can be calculated explicitly (Kasami \cite{Kasa1},
\cite{Kasa2}, \cite{Kasa3},van der Vlugt \cite{Vand2},
Wolfmann\cite{Wolf}).

In general, let $q=2^n$, $\cC$ be the binary cyclic code with length
$l=q-1$ and parity-check polynomial
\[h(x)=h_1(x)\cdots h_u(x)\quad (u\geq 1)\]
where $h_i(x)$ $(1\leq i\leq u)$ are distinct irreducible
polynomials in $\bF_{2}[x]$ with the same degree $e_i$ $(1\leq i\leq
u)$, then $\mathrm{dim}_{\bF_{2}}\cC=\sum\limits_{i=1}^{u}e_i$ .Let
$\pi^{-s_i}$ be a zero of $h_i(x)$, $1\leq s_i\leq q-2$ $(1\leq
i\leq u).$ Then the codewords in $\cC$ can be expressed by
\[c(\alpha_1,\cdots,\alpha_u)=(c_0,c_1,\cdots,c_{l-1})\quad (\alpha_1,\cdots,\alpha_u\in \bF_q)\]
where
$c_i=\sum\limits_{\lambda=1}^{u}\Tra^n_{1}(\alpha_{\lambda}\pi^{is_{\lambda}})$
$(0\leq i\leq l-1)$. Therefore the Hamming weight of the codeword
$c=c(\alpha_1,\cdots,\alpha_u)$ is {\setlength\arraycolsep{2pt}
\begin{eqnarray} \label{Wei}
w_H\left(c\right)&=& \#\left\{i\left|0\leq i\leq l-1,c_i\neq
0\right.\right\}
\nonumber\\[1mm]
&=& l-\#\left\{i\left|0\leq i\leq l-1,c_i=0\right.\right\}
\nonumber\\[1mm]
&=&
l-\frac{1}{2}\,\sum\limits_{i=0}^{l-1}\left(1+(-1)^{\Tra_{1}^n\left(\sum\limits_{\lambda=1}^{u}\alpha_{\lambda}\pi^{is_{\lambda}}\right)}\right)
\nonumber\\[1mm]
&=&l-\frac{l}{2}-\frac{1}{2}\,\sum\limits_{x\in
\bF_q^*}(-1)^{\Tra_1^n(f(x))} \nonumber
\\[1mm]
&=&\frac{l}{2}+\frac{1}{2}-\frac{1}{2}\,S(\alpha_1,\cdots,\alpha_u)\nonumber\\[1mm]
&=&2^{n-1}-\frac{1}{2}\,S(\alpha_1,\cdots,\alpha_u)
\end{eqnarray}
} where
$f(x)=\alpha_1x^{s_1}+\alpha_2x^{s_2}+\cdots+\alpha_ux^{s_u}\in
\bF_{q}[x]$, $\bF_q^*=\bF_q\backslash\{0\}$,  and
\[S(\alpha_1,\cdots,\alpha_u)=\sum\limits_{x\in \bF_q}(-1)^{\Tra_1^n(\alpha_1x^{s_1}+\cdots+\alpha_ux^{s_u})}.\]
In this way, the weight distribution of cyclic code $\cC$ can be
derived from the explicit evaluating of the exponential sums
\[S(\alpha_1,\cdots,\alpha_u)\quad(\alpha_1,\cdots,\alpha_u\in \bF_q).\]

 Let $h_1(x)$, $h_2(x)$ and $h_{3}(x)$
be the minimal polynomials of $\pi^{-1},\pi^{-(2^k+1)}$ and
$\pi^{-{(2^{3k}+1)}}$ over $\bF_{2}$ respectively. Then
$\mathrm{deg}\,h_i(x)=n\; \text{for}\; i=1,2,3.$

 Let
$\cC_1$ and $\cC_2$ be the binary cyclic codes  with length $l=q-1$
and parity-check polynomials $h_2(x)h_3(x)$ and $h_1(x)h_2(x)h_3(x)$
respectively. It is a consequence that $\cC_1$($\cC_2$, resp.) is
the dual of the binary BCH code with designed distance $5$ ($7$,
resp.) whose zeroes include $\pi^{2^{3k}+1}$ and $\pi^{2^k+1}$
($\pi^{2^{3k}+1}$ ,$\pi^{2^k+1}$ and $\pi$, resp.). Then we know
that the dimensions of $\cC_1$ and $\cC_2$  are $2n$ and $3n$
respectively.

For $\al,\be,\ga\in \bF_q$,  define the exponential sums
\begin{equation}\label{def T}
T(\al,\be)=\sum\limits_{x\in \bF_q}(-1)^{\Tra_1^n(\alpha
x^{2^{3k}+1}+\beta x^{2^k+1})}
\end{equation}
and
\begin{equation}\label{def S}
S(\al,\be,\ga)=\sum\limits_{x\in \bF_q}(-1)^{\Tra_1^n(\alpha
x^{2^{3k}+1}+\beta x^{2^k+1}+\ga x)}.
\end{equation}
 Then the complete
weight distributions of $\cC_1$ and $\cC_2$ can be derived from the
explicit evaluation of $T(\al,\be)$ and $S(\al,\be,\ga)$.

 Another application of binary exponential sums is to obtain the cross correlation and auto-correlation distribution
 among binary sequences.
 Let $\mathcal{F}$ be a collection of binary
m-sequences of period $q-1$ defined by

\[\mathcal{F}=\left\{\left\{a_i(t)\right\}_{i=0}^{q-2}|\,0\leq i\leq L-1 \right\}\]

The \emph{correlation function} of $a_i$ and $a_j$ for a shift
$\tau$ is defined by

\[M_{{i},{j}}(\tau)=\sum\limits_{\lambda=0}^{q-2}(-1)^{a_i({\lambda})-a_j({\lambda+\tau})}\hspace{2cm}(0\leq \tau\leq
q-2).\]

Binary sequences with low cross correlation and auto-correlation are
widely used in Code Division Multiple Access(CDMA) spread
spectrum(see Paterson\cite{Part},  Simon, Omura and Scholtz\cite{Sim
Omu}). Pairs of binary m-sequences with few-valued auto and cross
correlations have been extensively studied for several decades, see
Canteaut, Charpin and Dobbertin \cite{Can Cha}, Cusick and Dobbertin
\cite{Cus}, Ding, Helleseth and Lam\cite{Din Hel1}, Ding, Helleseth
and Martinsen\cite{Din Hel2}, Dobbertin, Felke, Helleseth and
Rosendahl \cite{Dob Fel},  Gold \cite{Gold}, Helleseth
\cite{Hell1},\cite{Hell2}, Helleseth, Kholosha and Ness\cite{Hel
Kho}, Helleseth and Kumar \cite{Hel Kum}, Hollmann and Xiang
\cite{Hol Xia}, Ness and Helleseth\cite{Nes Hel}-\cite{Nes Hel2},
Niho\cite{Niho}, Rosendahl\cite{Rose}, Yu and Gong\cite{Yu
Gon1}-\cite{Yu Gon2} and references therein.

Define the collection of sequences

\[
  \calF_1=\left\{a_{\al,\be}=\left(a_{\al,\be}(\pi^{\lambda})\right)_{\lambda=0}^{q-2}\big{|}\,\al, \be\in \bF_{q} \right\}
\]
where
  $a_{\al,\be}(\pi^{\lambda})=\Tra_1^n(\al \pi^{\lambda(2^{3k}+1)}+\be
  \pi^{\lambda(2^k+1)}+\pi^{\lambda})$.

 If $n/d$  is odd, define
\[
  \calF_2=\left\{a_{\be}=\left(a_{\al}(\pi^{\lambda})\right)_{\lambda=0}^{q-2}\big{|}\,\be\in \bF_q \right\}\bigcup \left\{a=\left(a(\pi^{\lambda})\right)_{\lambda=0}^{q-2}\right\}
\]
where
  $a_{\al}(\pi^{\lambda})=\Tra_1^n(\al\pi^{\lambda(2^{3k}+1)}+
  \pi^{\lambda(2^k+1)})$ and $a(\pi^{\lambda})=\Tra_1^n(\pi^{\lambda(2^{3k}+1)})$.

In this correspondence we will study the following collection of
m-sequences with period $q-1$
\begin{equation}\label{def F}
\calF=\left\{
\begin{array}{ll}
\calF_1, &\text{if}\; n/d\;\text{is even}\\[1mm]
\calF_1\cup \calF_2, &\text{if}\; n/d\;\text{is odd.}
\end{array}
\right.
\end{equation}

This paper is presented as follows. In Section 2 we introduce some
preliminaries and give auxiliary results. In Section 3 we will give
the value distribution of $T(\al,\be)$ for $\al\in \bF_{2^m},\be\in
\bF_q$ and the weight distribution of $\cC_1$. In Section 3 we will
determine the value distribution of $S(\al,\be,\ga)$,  the
correlation distribution among the sequences in $\calF$, and then
the weight distribution of $\cC_2$. Most proofs of lemmas and
theorems are presented in several appendices. The main tools are
quadratic form theory over finite fields of characteristic 2, some
moment identities on $T(\alpha,\beta)$ and a class of Artin-Schreier
curves. This paper is binary analogue of \cite{Luo Lin Xin} and
\cite{Zen Hu Jia Yue Cao}.

\section{Preliminaries}

\quad We follow the notations in Section 1. The first machinery to
determine the values of exponential sums $T(\alpha,\beta)$
$(\alpha\in \bF_{p^m},\beta\in \bF_q)$ defined in (\ref{def T}) is
quadratic form theory over $\bF_{q_0}$.

 Let $H$
be an $s\times s$ matrix over $\bF_{q_0}$. For the quadratic form
\begin{equation}\label{qua for}
F:\bF_{q_0}^s\ra \bF_{q_0},\quad F(x)=XHX^T\quad
(X=(x_1,\cdots,x_s)\in \bF_{q_0}^s),
\end{equation}
define $r_F$ of $F$ to be the rank of the skew-symmetric matrix
$H+H^T$. Then $r_F$ is even.

 We have the following result on the exponential sum of binary quadratic forms.
\begin{lemma}\label{qua}

For the quadratic form $F=XHX^T$ defined in (\ref{qua for}),
\[\sum\limits_{X\in\bF_{q_0}^s}\zeta_p^{\Tra_1^{d}(F(X))}=\pm
q_0^{s-\frac{r_F}{2}}\; \text{or}\; 0\] Moreover, if $r_F=s$, then
\[\sum\limits_{X\in\bF_{q_0}^s}\zeta_p^{\Tra_1^{d}(F(X))}=\pm
q_0^{\frac{s}{2}}\]
\end{lemma}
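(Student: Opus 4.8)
The plan is to reduce the sum to the Gauss sum of a non-degenerate quadratic form by splitting off the radical of the associated bilinear form. First I would introduce the symmetric bilinear form $B(X,Y)=F(X+Y)+F(X)+F(Y)=X(H+H^T)Y^T$ together with its radical $W=\ker(H+H^T)=\{X\in\bF_{q_0}^s\mid (H+H^T)X^T=0\}$, which by the definition of $r_F$ satisfies $\dim_{\bF_{q_0}}W=s-r_F$. Choosing any complement $V$ with $\bF_{q_0}^s=V\oplus W$, the crucial observation is that for $v\in V$ and $w\in W$ the cross term vanishes, $F(v+w)=F(v)+F(w)+B(v,w)=F(v)+F(w)$, since $B(v,w)=v(H+H^T)w^T=0$. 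Consequently the sum factors as
\[
\sum_{X\in\bF_{q_0}^s}(-1)^{\Tra_1^d(F(X))}=\Big(\sum_{v\in V}(-1)^{\Tra_1^d(F(v))}\Big)\Big(\sum_{w\in W}(-1)^{\Tra_1^d(F(w))}\Big).
\]

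I would then treat the two factors separately. The same identity applied inside $W$ gives $F(w_1+w_2)=F(w_1)+F(w_2)$, so $F|_W$ is additive and $w\mapsto\Tra_1^d(F(w))$ is an $\bF_2$-linear functional on $W$; its character sum is therefore $|W|=q_0^{s-r_F}$ when this functional is identically zero, and $0$ otherwise. On the complement $V$, of dimension $r_F$, the restriction $B|_{V\times V}$ is non-degenerate (if $B(v,\cdot)$ vanished on $V$ it would vanish on all of $V\oplus W$, forcing $v\in W\cap V=\{0\}$), so $F|_V$ is a non-degenerate quadratic form in the even number $r_F$ of variables over $\bF_{q_0}$.

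The one substantive input is the evaluation of the Gauss sum of a non-degenerate quadratic form in $r_F$ variables, which by the classification of such forms in characteristic $2$ (a sum of hyperbolic planes $xy$, possibly with one anisotropic plane) equals $\pm q_0^{r_F/2}$, the sign recording the type of the form; I would quote this from \cite{Lid Nie}. Multiplying the two factors then yields either $0$, when the linear functional on $W$ is nontrivial, or $\pm q_0^{r_F/2}\cdot q_0^{s-r_F}=\pm q_0^{\,s-r_F/2}$, which is exactly the stated dichotomy. For the final assertion, if $r_F=s$ then $W=\{0\}$, the vanishing case cannot arise, and the sum collapses to the $V$-factor, giving $\pm q_0^{s/2}$. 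I expect the only real work to lie in that quadratic Gauss sum evaluation; the rest is bookkeeping with the radical decomposition.
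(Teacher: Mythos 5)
Your proof is correct, but it follows a genuinely different route from the paper's. You split $\bF_{q_0}^s=V\oplus W$ along the radical $W$ of the polar form, factor the character sum as a product of a $V$-sum and a $W$-sum, and then import the substantive fact --- the evaluation $\pm q_0^{r_F/2}$ for a quadratic form with non-degenerate polar form, via the characteristic-$2$ classification --- from \cite{Lid Nie}. The paper instead squares the sum: expanding $\bigl(\sum_X(-1)^{\Tra_1^d(F(X))}\bigr)^2$ and using orthogonality of additive characters collapses everything onto the same radical (called $\mathcal{Z}$ there) and the same $\bF_2$-linear functional $Z\mapsto \Tra_1^d(ZHZ^T)$ that you construct on $W$; the square is then either $0$ or $q_0^s\cdot q_0^{s-r_F}$, and since the sum is a real integer this yields $0$ or $\pm q_0^{s-r_F/2}$ with no appeal to the classification of quadratic forms. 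So the two arguments isolate the identical vanishing criterion (nontriviality of the functional on the radical), but the paper's squaring trick is entirely self-contained and shorter, while yours is more structural: it cleanly separates the degenerate part from a genuinely non-degenerate form on $V$, and, if one wanted more than the lemma asks, it would also determine the sign through the type (Arf invariant) of $F|_V$, something the squaring argument cannot see. Both proofs correctly handle the final claim: with $r_F=s$ your $W$ is trivial, matching the paper's observation that $\mathcal{Z}=\{0\}$ forces the nonvanishing alternative.
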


\begin{proof}
We can calculate
\[{\setlength\arraycolsep{2pt}
\begin{array}{ll}&\left(\sum\limits_{X\in\bF_{q_0}^s}(-1)^{\Tra_{1}^{d}(F(X))}\right)^2
              =\sum\limits_{X,Z\in
\bF_{q_0}^s}(-1)^{\Tra_{1}^{d}\left(XHX^T+(X+Z)H(X+Z)^T\right)}\\[4mm]
              &\qquad\qquad=\sum\limits_{Z\in
\bF_{q_0}^s}(-1)^{\Tra_{1}^{d}(ZHZ^T)}\sum\limits_{X\in
\bF_{q_0}^s}(-1)^{\Tra_{1}^{d}\left(Z(H+H^T)X^T\right)}
\end{array}
}
\]
The inner sum is zero unless $Z(H+H^T)=0$. Define
\[\mathcal{Z}=\left\{Z\in \bF_{q_0}^s\,|\, Z(H+H^T)=0\right\}.\]
Then the map
\begin{equation}\label{map}
\begin{array}{rcl}
\eta:\mathcal{Z}&\longrightarrow& \bF_2\\[2mm]
Z&\mapsto& \Tra_1^d(ZHZ^T)
\end{array}
\end{equation}
is an additive group homomorphism.

If $\eta$ is surjective, then there are exactly one half $z\in
\mathcal{Z}$ mapping to $0$ and $1$ respectively. Hence
$\sum\limits_{X\in\bF_{q_0}^s}(-1)^{\Tra_{1}^{d}(F(X))}=0$.
Otherwise $\mathrm{Im}(\eta)=\left\{0\right\}$ and
$\left(\sum\limits_{X\in\bF_{q_0}^s}(-1)^{\Tra_{1}^{d}(F(X))}\right)^2=
q_0^{s}\cdot q_0^{s-r_F}$. Hence
$\sum\limits_{X\in\bF_{q_0}^s}(-1)^{\Tra_{1}^{d}(F(X))}=\pm
q_0^{s-\frac{r_F}{2}}$.

If $r_F=s$, then $\mathcal{Z}=\{0\}$ and
$\mathrm{Im}(\eta)=\left\{0\right\}$. Therefore
$\sum\limits_{X\in\bF_{q_0}^s}(-1)^{\Tra_{1}^{d}(F(X))}=\pm
q_0^{\frac{s}{2}}$.

\end{proof}

The following result, which has been proven in \cite{Lid Nie}, Chap.
6, will be used in Section 4.
\begin{lemma}\label{det gamma}
For the fixed quadratic form defined in  (\ref{qua for}), the value
distribution of
$\sum\limits_{X\in\bF_{q_0}^s}(-1)^{\Tra_1^{d}(F(X)+AX^T)}$ when $A$
runs through $\bF_{q_0}^s$ is shown as following
\[
\begin{array}{ccc}
value & \qquad\qquad\qquad&multiplicity \\[2mm]
0&\qquad\qquad\qquad&q_0^s-q_0^{r_F}\\[2mm]
q_0^{s-\frac{r_F}{2}}&\qquad\qquad\qquad &\frac{1}{2}(q_0^{r_F}+q_0^{\frac{r_F}{2}})\\[2mm]
-q_0^{s-\frac{r_F}{2}}&\qquad\qquad\qquad &\frac{1}{2}(q_0^{r_F}-q_0^{\frac{r_F}{2}})\\[2mm]
\end{array}
\]
\end{lemma}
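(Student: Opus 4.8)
The plan is to regard $S(A):=\sum_{X\in\bF_{q_0}^s}(-1)^{\Tra_1^d(F(X)+AX^T)}$ as a function of the parameter $A$ and to pin down its value distribution by combining the squaring trick from the proof of Lemma~\ref{qua} with the first moment $\sum_{A}S(A)$. First I would record the first moment: interchanging the order of summation gives
\[
\sum_{A\in\bF_{q_0}^s}S(A)=\sum_{X\in\bF_{q_0}^s}(-1)^{\Tra_1^d(F(X))}\sum_{A\in\bF_{q_0}^s}(-1)^{\Tra_1^d(AX^T)},
\]
and since the trace form $(A,X)\mapsto\Tra_1^d(AX^T)$ is nondegenerate, the inner character sum equals $q_0^s$ if $X=0$ and $0$ otherwise; hence $\sum_A S(A)=q_0^s$.

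Next I would compute $S(A)^2$ exactly as in the proof of Lemma~\ref{qua}, substituting $X\mapsto X+Z$. The quadratic cross terms collapse to $F(X)+F(X+Z)=Z(H+H^T)X^T+F(Z)$ and the linear terms collapse to $AZ^T$, so that
\[
S(A)^2=\sum_{Z\in\bF_{q_0}^s}(-1)^{\Tra_1^d(F(Z)+AZ^T)}\sum_{X\in\bF_{q_0}^s}(-1)^{\Tra_1^d(Z(H+H^T)X^T)}.
\]
The inner sum vanishes unless $Z\in\mathcal{Z}:=\{Z:Z(H+H^T)=0\}$, which has $\bF_2$-dimension $(s-r_F)d$, so $S(A)^2=q_0^s\sum_{Z\in\mathcal{Z}}(-1)^{\psi_A(Z)}$ with $\psi_A(Z):=\Tra_1^d(F(Z)+AZ^T)$. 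Since $Z\mapsto\Tra_1^d(F(Z))$ is additive on $\mathcal{Z}$ (this is precisely the homomorphism $\eta$ from the proof of Lemma~\ref{qua}) and $Z\mapsto\Tra_1^d(AZ^T)$ is additive, the map $\psi_A:\mathcal{Z}\to\bF_2$ is a group homomorphism. Therefore the inner sum is $|\mathcal{Z}|=q_0^{s-r_F}$ when $\psi_A\equiv0$ and is $0$ otherwise, which forces $S(A)\in\{0,\pm q_0^{s-r_F/2}\}$, with $S(A)\neq0$ exactly when $\psi_A$ vanishes identically on $\mathcal{Z}$.

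The decisive step is then to count the $A$ for which $\psi_A\equiv0$, i.e. for which $\Tra_1^d(AZ^T)=\Tra_1^d(F(Z))$ for all $Z\in\mathcal{Z}$. The assignment sending $A$ to the $\bF_2$-linear functional $Z\mapsto\Tra_1^d(AZ^T)$ on $\mathcal{Z}$ is $\bF_2$-linear from $\bF_{q_0}^s$ into the dual space $\mathcal{Z}^\ast$; its kernel is the orthogonal complement of $\mathcal{Z}$ for the $\bF_2$-bilinear form $(A,Z)\mapsto\Tra_1^d(AZ^T)$, which by nondegeneracy of the trace form has $\bF_2$-dimension $sd-(s-r_F)d=r_Fd$, so the map is surjective onto $\mathcal{Z}^\ast$. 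Surjectivity guarantees that the fixed functional $Z\mapsto\Tra_1^d(F(Z))$ lies in the image, whence the set of admissible $A$ is a single coset of the kernel and has size exactly $q_0^{r_F}$; this gives multiplicity $q_0^s-q_0^{r_F}$ for the value $0$. Writing $N_\pm$ for the number of $A$ with $S(A)=\pm q_0^{s-r_F/2}$, the relations $N_++N_-=q_0^{r_F}$ and $q_0^{s-r_F/2}(N_+-N_-)=\sum_A S(A)=q_0^s$ yield $N_+-N_-=q_0^{r_F/2}$, hence $N_\pm=\tfrac12\bigl(q_0^{r_F}\pm q_0^{r_F/2}\bigr)$, matching the table.

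I expect the counting step to be the main obstacle: one must carefully keep apart the $\bF_{q_0}$-structure, in which the pairing $AZ^T$ takes values, from the $\bF_2$-structure on which the characters and the homomorphism $\psi_A$ actually live, and then confirm both that the functional induced by $A$ sweeps out all of $\mathcal{Z}^\ast$ and that the target functional $\Tra_1^d(F(\cdot))$ genuinely lies in that image, so the relevant fibre is nonempty of the predicted size $q_0^{r_F}$. Note also that the first moment alone cannot separate $N_+$ from $N_-$ until one has already established that $S(A)$ assumes only the three listed values, which is why the $S(A)^2$ computation must precede the moment bookkeeping.
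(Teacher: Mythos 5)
Your proof is correct, and it is genuinely different from what the paper does: the paper offers no argument at all for this lemma, simply citing Lidl--Niederreiter, Chapter 6, where the result is obtained via the classification of quadratic forms over $\bF_{q_0}$ in characteristic $2$ (reduction to a canonical form and case analysis by type). Your route instead recycles the paper's own proof of Lemma \ref{qua}: the squaring substitution $X\mapsto X+Z$ localizes everything on $\mathcal{Z}=\{Z: Z(H+H^T)=0\}$, the homomorphism property of $\psi_A$ forces $S(A)\in\{0,\pm q_0^{s-r_F/2}\}$, and the counting of nonzero values reduces to a duality statement for the nondegenerate $\bF_2$-pairing $(A,Z)\mapsto\Tra_1^d(AZ^T)$. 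The two delicate points — surjectivity of $A\mapsto \Tra_1^d(A\,\cdot^T)\big|_{\mathcal{Z}}$ onto $\mathcal{Z}^{\ast}$, and membership of $\eta=\Tra_1^d(F(\cdot))\big|_{\mathcal{Z}}$ in the image — you handle correctly by the $\bF_2$-dimension count $\dim\ker = sd-(s-r_F)d = r_Fd$, so the fibre over $\eta$ is a coset of size $q_0^{r_F}$; together with the first moment $\sum_A S(A)=q_0^s$ this pins down $N_\pm=\tfrac12\bigl(q_0^{r_F}\pm q_0^{r_F/2}\bigr)$. What your approach buys is self-containedness and a conceptual bonus: it shows the distribution of $S(A)$ is independent of the type (Arf invariant) of the form, since $N_+-N_-=q_0^{r_F/2}$ falls out of the moment identity rather than from a canonical-form case split; what the citation buys is brevity and a connection to standard classification results. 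Your closing caution is also apt: the order of steps matters, since the moment bookkeeping is useless until the three-value property and the count $q_0^{r_F}$ of nonzero values are both established.
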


Since $s=n/d$, the field $\bF_q$ is a vector space over $\bF_{q_0}$
with dimension $s$. We fix a basis $v_1,\cdots,v_s$ of $\bF_q$ over
$\bF_{q_0}$. Then each $x\in \bF_q$ can be uniquely expressed as
\[x=x_1v_1+\cdots+x_sv_s\quad (x_i\in \bF_{q_0}).\]
Thus we have the following $\bF_{q_0}$-linear isomorphism:
\[\bF_q\xrightarrow{\sim}\bF_{q_0}^s,\quad x=x_1v_1+\cdots+x_sv_s\mapsto
X=(x_1,\cdots,x_s).\] With this isomorphism, a function $f:\bF_q\ra
\bF_{q_0}$ induces a function $F:\bF_{q_0}^s\ra \bF_{q_0}$ where for
$X=(x_1,\cdots,x_s)\in \bF_{q_0}^s, F(X)=f(x)$ with
$x=x_1v_1+\cdots+x_sv_s$. In this way, function
$f(x)=\Tra_{d}^n(\gamma x)$ for $\gamma\in \bF_q$ induces a linear
form \begin{equation} F(X)=\Tra_{d}^n(\gamma
x)=\sum\limits_{i=1}^{s}\Tra_{d}^n(\gamma v_i)x_i=A_{\ga}X^T
\end{equation}\label{def A_gamma}
 where $A_{\ga}=\left(\Tra_{d}^n(\gamma
v_1),\cdots,\Tra_{d}^n(\gamma v_s)\right),$
 and
$f_{\alpha,\beta}(x)=\Tra_d^n(\alpha x^{p^{3k}+1}+\beta x^{p^k+1})$
for $\al,\be\in \bF_q$ induces a quadratic form

\begin{eqnarray}\label{def H_al be}
F_{\alpha,\beta}(X)=XH_{\alpha,\beta}X^T
\end{eqnarray}

From Lemma \ref{qua},  for $\alpha,\beta,\ga\in \bF_q$, in order to
determine the values of
\[T(\alpha,\beta)=\sum\limits_{x\in \bF_q}(-1)^{\Tra_1^n(\alpha
x^{2^{3k}+1}+\beta x^{2^k+1})}=\sum\limits_{X\in
\bF_{q_0}^s}(-1)^{\Tra_1^{d}\left(XH_{\alpha,\beta}X^T\right)}\] and
\[S(\alpha,\beta,\ga)=\sum\limits_{x\in \bF_q}(-1)^{\Tra_1^n(\alpha
x^{2^{3k}+1}+\beta x^{2^k+1}+\ga x)}=\sum\limits_{X\in
\bF_{q_0}^s}(-1)^{\Tra_1^{d}\left(XH_{\alpha,\beta}X^T+A_{\ga}X^T\right)},\]
we need to determine the rank of $H_{\alpha,\beta}+H_{\al,\be}^T$
over $\bF_{q_0}$.

Define $d'=\gcd(n,2k)$. Then an easy observation shows
\begin{equation}\label{rel d d'}
d'=\left\{
\begin{array}{ll}
d, & \text{if}\; n/d\;\text{is odd};\\[1mm]
2d, &\text{otherwise.}
\end{array}
\right.
\end{equation}

\begin{lemma}\label{rank}
For $(\alpha,\beta)\in \bF_{p^m}\times \bF_q\backslash\{(0,0)\}$,
let $r_{\alpha,\beta}$ be the rank of
$H_{\alpha,\beta}+H_{\alpha,\beta}^T$.  Then we have
\begin{itemize}
  \item[(i).] if $n/d$ is odd, then  then the possible values of $r_{\al,\be}$ are $s-1$ and $s-3$.
  \item[(ii).] if $n/d$ is even, then the possible values of $r_{\al,\be}$
  are $s$, $s-2$, $s-4$ and $s-6$.
\end{itemize}
\end{lemma}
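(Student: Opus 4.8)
The plan is to translate the rank $r_{\al,\be}$ into the dimension of the radical of the associated symmetric bilinear form and then count the $\bF_q$-rational points of a linearized polynomial. Since $H_{\al,\be}+H_{\al,\be}^T$ is the matrix of the polarization of the quadratic form in (\ref{def H_al be}), its rank over $\bF_{q_0}$ satisfies $s-r_{\al,\be}=\dim_{\bF_{q_0}}\mathcal{R}$, where $\mathcal{R}\subseteq\bF_q$ is the radical $\{x:B(x,y)=0\text{ for all }y\in\bF_q\}$ and $B$ is the polarization of $f_{\al,\be}$. A direct expansion gives
\[ B(x,y)=\Tra_d^n\big(\al(x^{2^{3k}}y+xy^{2^{3k}})+\be(x^{2^k}y+xy^{2^k})\big). \]
Because $d\mid k$, the maps $z\mapsto z^{2^k}$ and $z\mapsto z^{2^{3k}}$ lie in $\Gal(\bF_q/\bF_{q_0})$, so $\Tra_d^n$ is invariant under them and I can move all Frobenius twists onto $y$. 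This rewrites $B(x,y)=\Tra_d^n(y\,\ell(x))$ with
\[ \ell(x)=\al x^{2^{3k}}+(\al x)^{2^{-3k}}+\be x^{2^k}+(\be x)^{2^{-k}}, \]
a self-adjoint linearized map. Since the pairing $(a,b)\mapsto\Tra_d^n(ab)$ is nondegenerate, $\mathcal{R}=\ker\ell$, and applying the bijection $z\mapsto z^{2^{3k}}$ shows $\ker\ell=\ker L$ for
\[ L(x)=\ell(x)^{2^{3k}}=\al^{2^{3k}}x^{2^{6k}}+\be^{2^{3k}}x^{2^{4k}}+\be^{2^{2k}}x^{2^{2k}}+\al x. \]
Thus $s-r_{\al,\be}=\dim_{\bF_{q_0}}\ker L$.

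The heart of the argument is to bound $\dim\ker L$. The key observation is that $L$ is a $2^{2k}$-linearized polynomial: with $Q=2^{2k}$ one has $L(x)=\al^{2^{3k}}x^{Q^3}+\be^{2^{3k}}x^{Q^2}+\be^{2^{2k}}x^{Q}+\al x$, of $Q$-degree $3$. Assuming first that $\al\neq0$, the linear coefficient $\al$ is nonzero, so $L$ is separable and has exactly $Q^3$ roots in $\overline{\bF_2}$; being additive and fixed by the scalar field $\bF_{2^{2k}}$, these roots form an $\bF_{2^{2k}}$-vector space $U$ of dimension $3$. The $\bF_q$-rational kernel is then $U\cap\bF_q$, an $\bF_{2^{d'}}$-subspace, where $d'=\gcd(n,2k)$ and $\bF_{2^{2k}}\cap\bF_q=\bF_{2^{d'}}$. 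Since $\bF_{2^{2k}}$ and $\bF_q$ are linearly disjoint over their intersection $\bF_{2^{d'}}$, any $\bF_{2^{d'}}$-independent elements of $U\cap\bF_q$ remain $\bF_{2^{2k}}$-independent in $U$, whence
\[ \dim_{\bF_{2^{d'}}}(U\cap\bF_q)\leq\dim_{\bF_{2^{2k}}}U=3. \]
The degenerate case $\al=0$, $\be\neq0$ only lowers the count: there $L$ factors through $x^{2^{2k}}$ and its $\bF_{2^{d'}}$-kernel has dimension at most $1$.

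It remains to pass from the $\bF_{2^{d'}}$-dimension to the $\bF_{q_0}$-dimension and to invoke parity. As $\bF_{q_0}=\bF_{2^d}\subseteq\bF_{2^{d'}}$ (indeed $d\mid d'$), one has $\dim_{\bF_{q_0}}\ker L=(d'/d)\dim_{\bF_{2^{d'}}}\ker L\leq 3\,(d'/d)$, and by \eqref{rel d d'} the ratio $d'/d$ equals $1$ when $n/d$ is odd and $2$ when $n/d$ is even. Finally, $H_{\al,\be}+H_{\al,\be}^T$ is alternating in characteristic $2$ (its diagonal vanishes), so $r_{\al,\be}$ is even. When $n/d$ is odd, $s$ is odd and $s-r_{\al,\be}\leq3$ is odd, forcing $s-r_{\al,\be}\in\{1,3\}$, i.e. $r_{\al,\be}\in\{s-1,s-3\}$, which is (i); when $n/d$ is even, $s-r_{\al,\be}=2\dim_{\bF_{2^{d'}}}\ker L$ is even and at most $6$, so $r_{\al,\be}\in\{s,s-2,s-4,s-6\}$, which is (ii). I expect the main obstacle to be this descent step: one must isolate the correct $Q=2^{2k}$ linearized structure (working over $\bF_{q_0}$ prematurely inflates the apparent $Q$-degree to $6k/d$ and loses the bound) and justify the dimension count via linear disjointness rather than naive root-counting. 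I also note that the hypothesis $\al\in\bF_{2^m}$ appears unnecessary for the rank estimate itself and is presumably used only in the subsequent value-distribution analysis.
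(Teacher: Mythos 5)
Your proof is correct and follows essentially the same route as the paper's: both identify $s-r_{\al,\be}$ with the $\bF_{q_0}$-dimension of the set of zeros in $\bF_q$ of the $2^{2k}$-linearized polynomial $\phi_{\al,\be}(x)=\al^{2^{3k}}x^{2^{6k}}+\be^{2^{3k}}x^{2^{4k}}+\be^{2^{2k}}x^{2^{2k}}+\al x$ (your $L$), whose roots form a $3$-dimensional $\bF_{2^{2k}}$-space when $\al\neq 0$, then bound the intersection with $\bF_q$ via linear disjointness over $\bF_{2^{d'}}$ and finish with evenness of the rank. The only differences are cosmetic: you treat the degenerate case $\al=0$ explicitly (the paper's separability claim $\phi_{\al,\be}'=\al\in\bF_q^*$ tacitly assumes $\al\neq 0$), and you are right that the hypothesis $\al\in\bF_{p^m}$ plays no role in the rank estimate.
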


 In order to determine the value distribution of
$T(\al,\be)$
 for $\al,\be\in \bF_q$, we need the
following result on moments of $T(\al,\be)$ and $S(\al,\be,\ga)$.
\begin{lemma}\label{moment}
For the exponential sum $T(\al,\be)$ and $S(\al,\be,\ga)$,
\[\begin{array}{ll}&(i). \;\;\sum\limits_{\al,\be\in
\bF_q}T(\al,\be)=2^{2n};\\[2mm]
                   &(ii).\; \text{if}\;n/d\;\text{is even},
                   \;\text{then}\\[2mm]
&\quad\quad\quad\sum\limits_{\al,\be\in
\bF_q}T(\al,\be)^2= (2^{n+d}+2^n-2^d)\cdot 2^{2n}\\[2mm]
                   &(iii). \;\text{if}\;n/d\;\text{is even},\; \text{then}\\[2mm]
                   &\quad\quad\quad \sum\limits_{\al,\be\in
\bF_q}T(\al,\be)^3=(2^{n+3d}+2^n-2^{3d})\cdot 2^{2n}.\\[4mm]
&(iv). \;\sum\limits_{\al,\be,\ga\in
\bF_q}S(\al,\be,\ga)^3=(2^{n+d}+2^n-2^{d})\cdot 2^{3n}.
\end{array}
\]
\end{lemma}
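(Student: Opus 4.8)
I organize the proof around one mechanism: expand the power, interchange the order of summation, and apply additive-character orthogonality. For the $m$-th power of $T$ this gives $\sum_{\al,\be\in\bF_q}T(\al,\be)^m = q^2\cdot N_m$, where $N_m$ is the number of tuples $(x_1,\dots,x_m)\in\bF_q^m$ with $\sum_i x_i^{2^{3k}+1}=0$ and $\sum_i x_i^{2^k+1}=0$; likewise $\sum_{\al,\be,\ga}S(\al,\be,\ga)^3 = q^3\cdot N'_3$, where $N'_3$ imposes in addition $\sum_i x_i=0$. So every moment reduces to a solution count for a system of power-sum equations, and the whole proof is the evaluation of these counts. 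Two arithmetic facts drive everything: the factorization $2^{3k}+1=(2^k+1)(2^{2k}-2^k+1)$, which gives $2^k+1\mid 2^{3k}+1$ and $x^{2^{3k}+1}=(x^{2^k+1})^{2^{2k}-2^k+1}$; and the gcd evaluations $\gcd(2^k-1,2^n-1)=2^d-1$ and, when $n/d$ is even, $\gcd(2^k+1,2^n-1)=2^d+1$.

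Identity (i) is immediate: orthogonality forces $x^{2^{3k}+1}=x^{2^k+1}=0$, i.e. $x=0$, so $N_1=1$ and the sum is $q^2=2^{2n}$. For (ii), on the locus $x,y\neq0$ I would write $y=xz$, turning $x^{2^{3k}+1}=y^{2^{3k}+1},\ x^{2^k+1}=y^{2^k+1}$ into $z^{2^{3k}+1}=z^{2^k+1}=1$; since $2^k+1\mid 2^{3k}+1$ this is just $z^{2^k+1}=1$, with $\gcd(2^k+1,2^n-1)=2^d+1$ roots (here $n/d$ even is used). Hence $N_2=1+(q-1)(2^d+1)=2^{n+d}+2^n-2^d$. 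Identity (iv) is the pleasant case: the extra relation lets me substitute $w=x+y$, and in characteristic $2$ the Gold-type expansion collapses $x^{2^j+1}+y^{2^j+1}+(x+y)^{2^j+1}=x^{2^j}y+xy^{2^j}=xy(x^{2^j-1}+y^{2^j-1})$ for $j\in\{k,3k\}$. Off the axes the system then reduces to $z^{2^k-1}=z^{2^{3k}-1}=1$ with $z=x/y$, i.e. $z^{2^k-1}=1$ (using $\gcd(2^k-1,2^{3k}-1)=2^k-1$), giving $(q-1)(2^d-1)$ pairs; with the $2q-1$ points on the axes this totals $2^{n+d}+2^n-2^d$, so $\sum S^3=2^{3n}(2^{n+d}+2^n-2^d)$. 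Note (iv) needs only $\gcd(2^k-1,2^n-1)$, which is why it carries no parity hypothesis.

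The real work is (iii), the pure third moment of $T$. First I would strip off the triples with a zero coordinate, contributing $1+3(q-1)(2^d+1)$ (the all-zero point, plus the three ways to set one coordinate to zero while the remaining two lie in a common fibre of size $2^d+1$). For the all-nonzero triples I would use the scaling invariance of the homogeneous system: fixing $x\in\bF_q^*$ and putting $(\xi,\eta)=(y/x,w/x)$ reduces the count to $(q-1)M$, where $M=\#\{(\xi,\eta)\in(\bF_q^*)^2:\ \xi^{2^{3k}+1}+\eta^{2^{3k}+1}=1,\ \xi^{2^k+1}+\eta^{2^k+1}=1\}$. Substituting $a=\xi^{2^k+1}$, $b=\eta^{2^k+1}$ and using $\xi^{2^{3k}+1}=a^{c}$ with $c=2^{2k}-2^k+1$ turns this into $a+b=1$ and $a^c+b^c=1$; since each of $a,b$ has $2^d+1$ preimages, $M=(2^d+1)^2\cdot\#\{a:\ a,1+a\in G,\ a^c+(1+a)^c=1\}$, where $G$ is the set of $(2^k+1)$-th powers. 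The enabling structural remark is that, because $n/d$ is even, the unique index-$(2^d+1)$ subgroup of $\bF_q^*$ coincides with both the $(2^k+1)$-th powers and the $(2^d+1)$-th powers, giving $G$ a clean subfield/norm description.

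I expect the main obstacle to be exactly the final constrained count $\#\{a\in\bF_q:\ a,1+a\in G,\ a^c+(1+a)^c=1\}$, which should equal $2^d-2$; as a sanity check one verifies directly that it vanishes for $d=1$, matching $2^d-2=0$. This is a genuine intersection-of-curves problem, since the additive equation $a^c+(1+a)^c=1$ must be reconciled with the multiplicative membership $a,1+a\in G$, and it is here that I would invoke the Weil bound for the associated Artin--Schreier/Kloosterman-type curve, or a subfield descent exploiting the norm description of $G$. Granting this value, the bookkeeping closes: $M=(2^d+1)^2(2^d-2)$, the all-nonzero contribution is $(q-1)(2^{3d}-3\cdot2^d-2)$, and combining with the zero-coordinate count gives $N_3=1+(q-1)(2^{3d}+1)=2^{n+3d}+2^n-2^{3d}$, whence $\sum T^3=2^{2n}(2^{n+3d}+2^n-2^{3d})$, as claimed.
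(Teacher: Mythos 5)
Your parts (i), (ii) and (iv) are correct and follow essentially the paper's own route: orthogonality turns each moment into a count of solutions of the power-sum system, which is then evaluated via $\gcd(2^k+1,2^n-1)=2^d+1$ (for $n/d$ even) and $\gcd(2^k-1,2^n-1)=2^d-1$. Your bookkeeping in (iv) --- eliminating $z=x+y$, counting $2q-1$ pairs with $xy=0$ and $(q-1)(2^d-1)$ pairs with $x/y\in\bF_{2^d}^*$ --- is a harmless repackaging of the paper's split into the cases $xyz=0$ (giving $3\cdot 2^n-2$ triples) and $xyz\neq 0$ (giving $(2^d-2)(2^n-1)$ triples), and both yield $2^{n+d}+2^n-2^d$.

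The genuine gap is in (iii), and you have located it yourself. Your reduction of $N_3$ to $1+3(q-1)(2^d+1)+(q-1)(2^d+1)^2\cdot\#\{a\,:\,a,1+a\in G,\ a^c+(1+a)^c=1\}$ with $c=2^{2k}-2^k+1$ is sound, but the crucial count $\#\{\cdots\}=2^d-2$ is never proved: you obtain the value by working backwards from the identity you are trying to establish, which is circular, and the tools you gesture at would not deliver it. The Weil bound gives an estimate with error term of order $g\sqrt{q}$ and cannot certify an exact value like $2^d-2$ (with $d$ fixed and $n$ growing) unless one first shows the relevant curve is maximal or minimal --- which is precisely the content that needs proving; moreover $a^c+(1+a)^c=1$ has degree about $2^{2k}$, so naive genus bounds are vacuous. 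The paper closes exactly this step by a subfield descent: dehomogenizing to $x^{2^{3k}+1}+y^{2^{3k}+1}=1$, $x^{2^k+1}+y^{2^k+1}=1$ and eliminating $y$ gives $(x^{2^{4k}}+x)(x^{2^{3k}}+x^{2^k})=0$, which forces $x$ (and symmetrically $y$) into $\bF_{2^{2d}}$ when $n/d\equiv 2\pmod 4$, resp.\ $\bF_{2^{4d}}$ when $n/d\equiv 0\pmod 4$; over these subfields the exponents $2^k+1$ and $2^{3k}+1$ reduce to $2^d+1$, so the system collapses to the Hermitian curve $x^{2^d+1}+y^{2^d+1}=1$, whose exact affine point count $2^{3d}-2^d$ (maximal over $\bF_{2^{2d}}$, minimal over $\bF_{2^{4d}}$) finishes the computation. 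Some such descent-to-an-exactly-countable-curve argument --- which is also what your "norm description of $G$" hint would have to become --- is missing, so as written (iii) is an unproven assertion.
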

\begin{proof}
see  \textbf{Appendix A.}
\end{proof}
\begin{remark}
We could calculate the moment identities $\sum\limits_{\al,\be\in
\bF_q}T(\al,\be)^i$ and $\sum\limits_{\al,\be,\ga\in
\bF_q}T(\al,\be,\ga)^i$ for $1\leq i\leq 3$. But the result shown above is enough to determine the value distributions
of $T(\al,\be)$ and $S(\al,\be,\ga)$.
\end{remark}
In the case $n/d$ is even, we could determine the explicit values of
$T(\al,\be)$. To this end we will study a class of Artin-Schreier
curves. A similar technique has been applied in Coulter
\cite{Coul1}, Theorem 6.1.

\begin{lemma}\label{Artin}Suppose $(\al,\be)\in (\bF_{q}\times
\bF_q)\big{\backslash}\{0,0\}$ and $d'=2d$. Let $N$ be the number of
$\bF_q$-rational (affine) points on the curve
\begin{equation}\label{Artin Sch}
\al x^{2^{3k}+1}+\be x^{2^k+1}=y^{2^d}+y.
\end{equation}
 Then
\[N=q+(2^d-1)\cdot T(\al,\be).\]
\end{lemma}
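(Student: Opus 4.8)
The plan is to count the affine points of (\ref{Artin Sch}) by fixing $x$ and counting the solutions $y$ of the Artin--Schreier relation. Write $c(x)=\al x^{2^{3k}+1}+\be x^{2^k+1}\in\bF_q$ and consider the $\bF_{q_0}$-linear map $L:\bF_q\ra\bF_q$, $L(y)=y^{2^d}+y$. Since $d\mid n$, its kernel is exactly $\{y:y^{2^d}=y\}=\bF_{q_0}$, so over its image $L$ has fibres of size $q_0=2^d$ and empty fibres elsewhere. Thus for each $x$ the number of admissible $y$ equals $2^d$ if $c(x)\in\Im L$ and $0$ otherwise.

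Next I would convert the membership condition $c\in\Im L$ into a character sum. The additive character $z\mapsto(-1)^{\Tra_1^n(bz)}$ is trivial on $\Im L$ iff $\Tra_1^n\!\big(b(y^{2^d}+y)\big)=0$ for all $y$; using $\Tra_1^n(by^{2^d})=\Tra_1^n(b^{2^{-d}}y)$ this is equivalent to $b^{2^{-d}}=b$, i.e. $b\in\bF_{q_0}$. Hence the annihilator of $\Im L$ is $\bF_{q_0}$, and orthogonality gives
\[
\#\{y\in\bF_q: L(y)=c\}=\sum_{b\in\bF_{q_0}}(-1)^{\Tra_1^n(bc)}.
\]
Summing over $x$ and interchanging the order of summation then yields
\[
N=\sum_{x\in\bF_q}\sum_{b\in\bF_{q_0}}(-1)^{\Tra_1^n(b\,c(x))}=\sum_{b\in\bF_{q_0}}T(b\al,b\be),
\]
where the term $b=0$ contributes $q$.

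The heart of the argument is to show $T(b\al,b\be)=T(\al,\be)$ for every $b\in\bF_{q_0}^*$. Because $d=\gcd(n,k)$ divides $k$ (and hence $3k$), every element of $\bF_{q_0}=\bF_{2^d}$ is fixed by the Frobenius powers $z\mapsto z^{2^k}$ and $z\mapsto z^{2^{3k}}$. Let $\lambda\in\bF_{q_0}^*$ be the unique square root of $b$ in $\bF_{q_0}$. Then $\lambda^{2^k+1}=\lambda^{2^{3k}+1}=\lambda^2=b$, so the bijective substitution $x\mapsto\lambda x$ transforms $\al x^{2^{3k}+1}+\be x^{2^k+1}$ into $b\big(\al x^{2^{3k}+1}+\be x^{2^k+1}\big)$, which proves the claimed invariance. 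Consequently
\[
N=q+(2^d-1)\,T(\al,\be),
\]
as asserted.

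The step I expect to be the crux is the scaling invariance $T(b\al,b\be)=T(\al,\be)$: it is what produces the uniform multiplicity $2^d-1$, and it hinges on the arithmetic fact that $\gcd(n,k)\mid k$, guaranteeing that a square root of $b$ lies in $\bF_{q_0}$ and is simultaneously neutral for both Gold-type exponents $2^k+1$ and $2^{3k}+1$. The point-counting and character-orthogonality portions are routine once the kernel and annihilator of $L$ have been identified.
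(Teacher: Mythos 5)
Your proof is correct, and its overall skeleton matches the paper's: both reduce the count to $N=q+\sum_{b\in\bF_{q_0}^*}T(b\al,b\be)$ via additive characters (the paper multiplies by $q$, sums $(-1)^{\Tra_1^n(\om(\cdot))}$ over all $\om\in\bF_q$, and lets the $y$-sum force $\om\in\bF_{q_0}$; you count fibres of $y\mapsto y^{2^d}+y$ and expand the indicator of its image over the annihilator $\bF_{q_0}$ --- the same computation, organized a bit more directly). The genuine difference is at the crux, the scaling invariance $T(b\al,b\be)=T(\al,\be)$ for $b\in\bF_{q_0}^*$. The paper chooses $t\in\bF_{2^{2d}}$ with $t^{2^d+1}=b$ and invokes the congruences $2^{3k}+1\equiv 2^k+1\equiv 2^d+1\pmod{2^{2d}-1}$, which hold precisely because the hypothesis $d'=2d$ forces $k/d$ and $3k/d$ to be odd; this is exactly where that hypothesis enters the paper's argument, and the same congruence is recycled afterwards in Lemma \ref{reduce num} to deduce $T(\al,\be)\equiv 1\pmod{2^d+1}$. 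You instead take $\lambda=b^{2^{d-1}}\in\bF_{q_0}$, the square root of $b$ inside $\bF_{q_0}$, and use only $d\mid k$ to get $\lambda^{2^k+1}=\lambda^{2^{3k}+1}=\lambda^2=b$. Your version is more elementary (it stays inside $\bF_{2^d}$ and needs no norm map from $\bF_{2^{2d}}$) and strictly more general: it never uses $d'=2d$, so it establishes $N=q+(2^d-1)\,T(\al,\be)$ for both parities of $n/d$, whereas the lemma as stated restricts to the even case. What the paper's choice of $t$ buys is the arithmetic input needed later for Lemma \ref{reduce num}; what yours buys is a shorter, hypothesis-free proof of the point count itself.
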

\begin{proof}
see \textbf{Appendix A.}
\end{proof}

Now we give an explicit evaluation of $T(\al,\be)$ in the case $n/d$
is even.
\begin{lemma}\label{reduce num}
Assumptions as in Lemma \ref{Artin} and let $n=2m$, then
\[
T(\al,\be)=\left\{
\begin{array}{ll}
\mu 2^{m}, &\text{if}\; r_{\al,\be}=s\\[2mm]
-\mu 2^{m+d}, &\text{if}\; r_{\al,\be}=s-2\\[2mm]
\mu 2^{m+2d}, &\text{if}\; r_{\al,\be}=s-4\\[2mm]
-\mu 2^{m+3d}, &\text{if}\; r_{\al,\be}=s-6.
\end{array}
\right.
\]
where $\mu=(-1)^{m/d}$.
\end{lemma}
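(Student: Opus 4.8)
The plan is to combine the two pieces of machinery already set up: the sign/magnitude information coming from quadratic form theory (Lemma~\ref{qua}), and the Artin–Schreier point count (Lemma~\ref{Artin}). The magnitude of $T(\al,\be)$ is essentially forced by the rank $r_{\al,\be}$. Indeed, by the isomorphism $\bF_q\cong\bF_{q_0}^s$ we have $T(\al,\be)=\sum_{X\in\bF_{q_0}^s}(-1)^{\Tra_1^d(F_{\al,\be}(X))}$, so Lemma~\ref{qua} gives $T(\al,\be)=\pm q_0^{\,s-r_{\al,\be}/2}$ (the value $0$ is excluded here since, by Lemma~\ref{rank}(ii), when $n/d$ is even the rank lies in $\{s,s-2,s-4,s-6\}$, and in each case $r_{\al,\be}=s$ on the relevant coordinate subspace). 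Writing $q_0=2^d$ and $s=n/d=2m/d$, the four rank values translate directly into the four magnitudes $2^m,\,2^{m+d},\,2^{m+2d},\,2^{m+3d}$ appearing in the statement. So the entire content of the lemma is the determination of the \emph{sign}, i.e.\ showing that the sign alternates as $(-1)^{r_{\al,\be}/2}$ times the global constant $\mu=(-1)^{m/d}$.

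To pin down the sign I would use Lemma~\ref{Artin}. Since $d'=2d$ when $n/d$ is even, that lemma applies and gives $N=q+(2^d-1)T(\al,\be)$, where $N$ is the number of affine $\bF_q$-points on the curve $\al x^{2^{3k}+1}+\be x^{2^k+1}=y^{2^d}+y$. The quantity $N$ is a genuine nonnegative integer, and this positivity constraint together with the known magnitude $|T(\al,\be)|=q_0^{\,s-r_{\al,\be}/2}$ should resolve the sign ambiguity. The key point is the congruence/size interplay: solving $N=q+(2^d-1)T(\al,\be)$ for $T(\al,\be)$ and demanding $N\ge 0$ constrains the sign, and one sharpens this by feeding the magnitude back in. Concretely, I expect to show that the sign of $T(\al,\be)$ is governed by whether $s-r_{\al,\be}/2$ is even or odd relative to $m/d$, which is exactly the pattern $\mu(-1)^{r_{\al,\be}/2}$. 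The cleanest route is probably to establish the base case ($r_{\al,\be}=s$, corresponding to $|T|=2^m$) directly from the Artin–Schreier count and the definition of $\mu$, and then track how the sign flips as the rank drops by $2$.

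The main obstacle I anticipate is the sign bookkeeping, not the magnitude: the factor $\mu=(-1)^{m/d}$ is a global normalization that must be extracted carefully, and getting it right requires relating the Gauss-sum sign for the rank-$s$ (nondegenerate) case to the parity of $m/d=s/2$. One must be careful that Lemma~\ref{qua} alone gives $\pm$, and that the Artin–Schreier relation $N=q+(2^d-1)T(\al,\be)$ determines the sign only because $N\ge0$ combined with the \emph{exact} magnitude $q_0^{\,s-r_{\al,\be}/2}$ leaves a unique admissible value; I would verify in each of the four rank cases that exactly one sign yields a nonnegative $N$, and check that the resulting signs assemble into $\mu(-1)^{r_{\al,\be}/2}$. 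A secondary technical point is confirming that the change of variables $x\mapsto y^{2^d}+y$ genuinely realizes $T(\al,\be)$ as a deficiency in the point count (this is the content of Lemma~\ref{Artin}, which I may invoke), so the only real work left is the sign determination sketched above.
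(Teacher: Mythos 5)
Your overall skeleton---magnitude from Lemma \ref{qua}, sign from the Artin--Schreier count of Lemma \ref{Artin}---matches the paper's, but the mechanism you propose for determining the sign does not work, and the sign is the entire content of the lemma. Nonnegativity of $N$ in $N=q+(2^d-1)T(\al,\be)$ is vacuous here: in the rank-$s$ case, $N=2^{2m}\pm(2^d-1)2^m$, and since $(2^d-1)2^m<2^{m+d}\le 2^{2m}$ (recall $d\mid m$), \emph{both} signs give $N>0$; more generally, for rank $s-2j$ one has $(2^d-1)2^{m+jd}<2^{m+(j+1)d}\le 2^{2m}$ whenever $m\ge (j+1)d$, which holds in essentially every case where that rank occurs (in particular whenever $s\ge 8$, and always for $j=0$). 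So your claim that ``exactly one sign yields a nonnegative $N$'' is false, your proposed base case ($r_{\al,\be}=s$) cannot be settled this way, and no ``feeding the magnitude back in'' extracts a sign from an inequality that both candidates satisfy. A secondary flaw: your parenthetical reason for excluding the value $0$ permitted by Lemma \ref{qua} (``$r_{\al,\be}=s$ on the relevant coordinate subspace'') is not an argument; for $r_{\al,\be}<s$ that lemma genuinely allows $0$, and something must rule it out.

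What the paper actually uses is a \emph{congruence}, not an inequality, obtained from a symmetry of the curve (this is the missing idea). Since $d'=2d$, both $k/d$ and $3k/d$ are odd, hence $2^{3k}+1\equiv 2^k+1\equiv 2^d+1 \pmod{2^{2d}-1}$; consequently the group of $(2^{2d}-1)$-th roots of unity in $\bF_q^*$ acts freely on the affine points with $x\neq 0$ via $(x,y)\mapsto (tx,\,t^{2^d+1}y)$, while the points with $x=0$ are exactly the $2^d$ points $(0,y)$, $y\in\bF_{2^d}$. Therefore $N\equiv 2^d \pmod{2^{2d}-1}$, and combining this with $N=q+(2^d-1)T(\al,\be)$ and $q\equiv 1\pmod{2^{2d}-1}$ yields $T(\al,\be)\equiv 1 \pmod{2^d+1}$. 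Since $2^{m+jd}\equiv(-1)^{m/d+j}=\mu(-1)^j \pmod{2^d+1}$, exactly one of $\pm 2^{m+jd}$ is congruent to $1$, namely $\mu(-1)^j2^{m+jd}$; this congruence simultaneously eliminates the value $0$. Without this orbit-counting congruence your outline cannot be completed, so the proposal has a genuine gap at its central step.
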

\begin{proof}
Consider the $\bF_q$-rational (affine) points on the Artin-Schreier
curve in Lemma \ref{Artin}. It is easy to verify that $(0,y)$ with
$y\in \bF_{2^d}$ are exactly the points on the curve with $x=0$. If
$(x,y)$ with $x\neq 0$ is a point on this curve, then so are $(t
x,t^{2^d+1}y)$ with $t^{2^{2d}-1}=1$ (note that $2^{3k}+1\equiv
2^k+1\equiv 2^d+1\pmod {2^{2d}-1}$ since $3k/d$ and $k/d$ are both
odd by (\ref{rel d d'})). In total, we have
\[q+(2^d-1)T(\al,\be)=N\equiv 2^d\pmod {2^{2d}-1}\]
which yields
\[T(\al,\be)\equiv 1\pmod{2^d+1}.\]

 We only consider the case $r_{\al,\be}=s$ and $m/d$ is odd. The other cases are similar. In this case $T(\al,\be)=\pm
 2^{m}$. Assume $T(\al,\be)=2^{m}$.
 Then ${2^d+1}\mid 2^{m}-1$ which contradicts to $m/d$ is odd. Therefore $T(\al,\be)=-2^{m}$.
\end{proof}
\begin{remark}
(i) In the case $n/d$ is even, the non-binary case (see \cite{Luo
Lin Xin}, Lemma 6)  and the binary case could be solved in a unified
way. But the case $n/d$ odd is quite different.

 (ii). Applying Lemma
\ref{reduce num} to Lemma \ref{Artin}, we could determine the number
of rational points on the curve (\ref{Artin Sch}).
\end{remark}

\section{Exponential Sums $T(\al,\be)$ and Cyclic Code $\cC_1$}

\quad Define
\[N_{i}=\left\{(\al,\be)\in
\bF_{q}\times
\bF_q\backslash\{(0,0)\}\left|r_{\al,\be}=s-i\right.\right\}.
\] Then $n_i=\big{|}N_i\big{|}$.

According to the possible values of $T(\al,\be)$ given by Lemma
\ref{qua} (setting $F(X)=XH_{\al,\be}X^T=\Tra_{d}^n(\al
x^{2^{3k}+1}+\be x^{2^k+1})$), we define that for $\varepsilon=\pm
1$,
\[N_{\varepsilon,i}=\left\{(\al,\be)\in
\bF_q^2\backslash\{(0,0)\}\left|T(\al,\be)=\veps 2^{\frac{n+id}{2}}
\right.\right\}\]  and $n_{\veps,i}=|N_{\veps,i}|$.

 Now we could give the value distribution of $T(\al,\be)$ and the
 weight distribution of $\cC_1$. We will only focus on the case
 $n/d$ is even since the odd case has been solved in \cite{Hol Xia}.

\begin{theo}\label{value dis T}
The value distribution of the multi-set
$\left\{T(\al,\be)\left|\al,\be\in \bF_q\right.\right\}$ and the
weight distribution of $\cC_1$ are shown as following (Column 1 is
the value of $T(\al,\be)$, Column 2 is the weight of
$c(\al,\be)=\Tra_1^n\left(\al \pi^{i(2^{3k}+1)}+\be
\pi^{i(2^k+1)}\right)_{i=0}^{q-2}$ and Column 3 is the corresponding
multiplicity).

(i). For the case $n/d$ is odd,
\begin{center}
\begin{tabular}{|c|c|c|}
\hline
values &weights &multiplicity \\[2mm]
\hline
$2^{(n+d)/2}$&$2^{n-1}-2^{(n+d-2)/2}$&$\left(2^{n-d-1}+2^{(n-d-2)/2}\right)(2^n-1)$
\\[2mm]
\hline
$-2^{(n+d)/2}$&$2^{n-1}+2^{(n+d-2)/2}$&$\left(2^{n-d-1}-2^{(n-d-2)/2}\right)(2^n-1)$
\\[2mm]
\hline $0$&$2^{n-1}$&$\left(2^n-2^{n-d}+1\right)(2^n-1)$
\\[2mm]
\hline $2^n$&$0$&$1$
\\[2mm]
\hline
\end{tabular}
\end{center}

(ii).  For the case $n/d$ is even,
\begin{center}
\begin{tabular}{|c|c|c|}
\hline
values &weights& multiplicity \\[2mm]
\hline $\mu 2^m$& $2^{n-1}-\mu 2^{m-1}$& $
\frac{(2^n-1)\left(2^{n+6d}-2^{n+4d}-2^{n+d}+\mu 2^{m+5d}-\mu
2^{m+4d}+2^{6d}\right)}{(2^d+1)(2^{2d}-1)(2^{3d}+1)} $
\\[2mm]
\hline $-\mu {2}^{m+d}$& $2^{n-1}+\mu 2^{m+d-1}$& $
\frac{(2^n-1)(2^{n+3d}+2^{n+2d}-2^n-2^{n-d}-2^{n-2d}-\mu
2^{m+3d}+\mu 2^{m}+2^{3d})}{(2^d+1)^2(2^{2d}-1)} $
\\[2mm]
\hline
 $\mu {2}^{m+2d}$&$2^{n-1}-\mu 2^{m+2d-1}$&$
 \frac{
(2^{m-d}+\mu)(2^{m+d}+2^m-2^{m-2d}-\mu
2^d)(2^{n}-1)}{(2^d+1)^3(2^{d}-1)} $
\\[2mm]
\hline $-\mu {2}^{m+3d}$&$2^{n-1}+\mu 2^{m+3d-1}$&$
\frac{(2^{m-2d}-\mu)(2^{m-d}+\mu)(2^{n}-1)}{(2^d+1)(2^{2d}-1)(2^{3d}+1)}
$
\\[2mm]

\hline $2^n$&$0$&$1$
\\[2mm]
\hline
\end{tabular}
\end{center}
where $\mu=(-1)^{m/d}$.
\end{theo}

\begin{proof}
see \cite{Hol Xia} for (i) and \textbf{Appendix B} for (ii).
\end{proof}
\begin{remark}

\begin{itemize}
\item[(i).] In the case $k\in\left\{\frac{n}{4},\frac{n}{2},\frac{3n}{4}\right\}$, the exponential sum $T(\al,\be)=\sum\limits_{x\in
\bF_q}(-1)^{\Tr_1^n\left(\left(\al^{2^k}+\be\right)x^{2^{k}+1}\right)}$
and the cyclic code $\cC_1$ has been extensively studied, for
example, see \cite{Coul1}, \cite{Gold}.
\item[(ii).] In the case $k\in\left\{\frac{n}{6},\frac{5n}{6}\right\}$, the exponential sum $T(\al,\be)
$ and the weight distribution of $\cC_1$ has been shown in \cite{Luo
Tan Wan}. In the case $k\in \{\frac{n}{3},\frac{2n}{3}\}$, the
exponential sum
 $T(\al,\be)=\sum\limits_{x\in \bF_q}(-1)^{\Tr_1^n\left(\al^{2^{n-1}} x+\be x^{2^{k}+1}\right)}$
is a special case in \cite{Coul1}, \cite{Gold}.
    \item[(iii).] In the case $n/d$ is even. Since $\gcd(2^{n}-1,
2^k+1)=2^d+1$, the first $l'=\frac{2^n-1}{2^d+1}$ coordinates of
each codeword of $\cC_1$ form a cyclic code $\cC_1'$ with length
$l'$ and dimension $2n$. Let $(A_0',\cdots,A_{l'}')$ be the weight
distribution of $\cC_1'$, then $A_i'=A_{(2^d+1)i}$ $(0\leq i\leq
l')$.
\end{itemize}
\end{remark}

\section{Results on Correlation Distribution of Sequences and Cyclic Code $\cC_2$}

\quad Recall $\phi_{\al,\be}(x)$ in the proof of Lemma \ref{rank}
and $N_{i,\veps}$ in the proof of Theorem \ref{value dis T}. Finally
we will determine the value distribution of $S(\al,\be,\ga)$, the
correlation distribution among sequences in $\calF$ defined in
(\ref{def F}) and the weight distribution of $\cC_2$ defined in
Section 1.

\begin{theo}\label{value dis S}
The value distribution of the multi-set
$\left\{S(\al,\be,\ga)\left|\al,\be,\ga\in \bF_q\right.\right\}$ and
the weight distribution of $\cC_2$ are shown as following (Column 1
is the value of $S(\al,\be,\ga)$, Column 2 is the weight of
$c(\al,\be,\ga)=\left(\Tra_1^n(\al \pi^{i(2^{3k}+1)}+\be
\pi^{i(2^k+1)}+\ga \pi^i)\right)_{i=0}^{q-2}$ and Column 3 is the
corresponding multiplicity).

 (i). For the case $n/d$ is odd (that is, $d'=d$),
\begin{center}
\begin{tabular}{|c|c|c|}
\hline
values & weights & multiplicity \\[2mm]
\hline $2^{(n+d)/2}$ & $2^{n-1}-2^{(n+d-2)/2}$ &
$\frac{(2^{n-d-1}+2^{(n-d-2)/2})(2^n-1)(2^{n+2d}-2^{n}-2^{n-d}+2^{2d})}{2^{2d}-1}$
\\[2mm]
\hline $
-2^{(n+d)/2}$&$2^{n-1}+2^{(n+d-2)/2}$&$\frac{(2^{n-d-1}-2^{(n-d-2)/2})(2^n-1)(2^{n+2d}-2^{n}-2^{n-d}+2^{2d})}{2^{2d}-1}$
\\[2mm]
\hline
$2^{(n+3d)/2}$&$2^{n-1}-2^{(n+3d-2)/2}$&$\frac{(2^{n-3d-1}+2^{(n-3d-2)/2})(2^{n-d}-1)(2^n-1)}{2^{2d}-1}$
\\[2mm]
\hline
$-2^{(n+3d)/2}$&$2^{n-1}+2^{(n+3d-2)/2}$&$\frac{(2^{n-3d-1}-2^{(n-3d-2)/2})(2^{n-d}-1)(2^n-1)}{2^{2d}-1}$
\\[2mm]
\hline
$0$&$2^{n-1}$&$\scriptstyle{(2^n-1)(2^{2n}-2^{2n-d}+2^{2n-4d}+2^n-2^{n-d}-2^{n-3d}+1)}$
\\[2mm]
\hline $2^n$&$0$&$1$
\\[2mm]
\hline
\end{tabular}
\end{center}

(ii).  For the case $n/d$ is even,
\begin{center}
\begin{tabular}{|c|c|c|}
\hline
values &weights& multiplicity \\[2mm]
\hline $ \s{2^m}$& $\s{2^{n-1}- 2^{m-1}}$& $
\s{\frac{(2^{n-1}+2^{m-1})(2^n-1)\left(2^{n+6d}-2^{n+4d}-2^{n+d}+\mu
2^{m+5d}-\mu 2^{m+4d}+2^{6d}\right)}{(2^d+1)(2^{2d}-1)(2^{3d}+1)}} $
\\[2mm]
\hline $\s{- 2^m}$& $\s{2^{n-1}+ 2^{m-1}}$& $
\s{\frac{(2^{n-1}-2^{m-1})(2^n-1)\left(2^{n+6d}-2^{n+4d}-2^{n+d}+\mu
2^{m+5d}-\mu 2^{m+4d}+2^{6d}\right)}{(2^d+1)(2^{2d}-1)(2^{3d}+1)}} $
\\[2mm]
\hline

 $ \s{{2}^{m+d}}$& $\s{2^{n-1}- 2^{m+d-1}}$& $
\s{\frac{(2^{n-2d-1}+2^{m-d-1})(2^n-1)(2^{n+3d}+2^{n+2d}-2^n-2^{n-d}-2^{n-2d}-\mu
2^{m+3d}+\mu 2^{m}+2^{3d})}{(2^d+1)^2(2^{2d}-1)}} $
\\[2mm]
\hline

 $\s{- {2}^{m+d}}$& $\s{2^{n-1}+ 2^{m+d-1}}$& $
\s{\frac{(2^{n-2d-1}-2^{m-d-1})(2^n-1)(2^{n+3d}+2^{n+2d}-2^n-2^{n-d}-2^{n-2d}-\mu
2^{m+3d}+\mu 2^{m}+2^{3d})}{(2^d+1)^2(2^{2d}-1)}} $
\\[2mm]
\hline

 $ \s{{2}^{m+2d}}$&$\s{2^{n-1}- 2^{m+2d-1}}$&$
 \s{\frac{
(2^{m-d}+\mu)(2^{m+d}+2^m-2^{m-2d}-\mu
2^d)(2^{n-4d-1}+2^{m-2d-1})(2^{n}-1)}{(2^d+1)^3(2^{d}-1)}} $
\\[2mm]
\hline $\s{- {2}^{m+2d}}$&$\s{2^{n-1}+ 2^{m+2d-1}}$&$
 \s{\frac{
(2^{m-d}+\mu)(2^{m+d}+2^m-2^{m-2d}-\mu
2^d)(2^{n-4d-1}-2^{m-2d-1})(2^{n}-1)}{(2^d+1)^3(2^{d}-1)}} $
\\[2mm]
\hline
 $ \s{{2}^{m+3d}}$&$\s{2^{n-1}- 2^{m+3d-1}}$&$
\s{\frac{(2^{m-2d}-\mu)(2^{m-d}+\mu)(2^{n-6d-1}+2^{m-3d-1})(2^{n}-1)}{(2^d+1)(2^{2d}-1)(2^{3d}+1)}}
$
\\[2mm]

\hline
 $\s{- {2}^{m+3d}}$&$\s{2^{n-1}+ 2^{m+3d-1}}$&$
\s{\frac{(2^{m-2d}-\mu)(2^{m-d}+\mu)(2^{n-6d-1}-2^{m-3d-1})(2^{n}-1)}{(2^d+1)(2^{2d}-1)(2^{3d}+1)}}
$
\\[2mm]
\hline $\s{0}$& $ \s{2^{n-1}}$&$
\s{\frac{(2^n-1)(2^{2n}+2^{2n-9d}-\veps(
2^{3m}-
2^{3m-d}- 2^{3m-3d}+ 2^{3m-5d}+ 2^{3m-7d}-
2^{3m-8d})+2^n-2^{n-d}-2^{n-4d}-2^{n-6d}+2^d+1)}{2^d+1}}
$ \\[2mm]
\hline $\s{2^n}$&$\s{0}$&$\s{1}$
\\[2mm]
\hline
\end{tabular}
\end{center}
where $\mu=(-1)^{m/d}$.

\end{theo}
\begin{proof}
See \textbf{Appendix C.}
\end{proof}

In order to give the correlation distribution among the sequences in
$\calF$, we need an easy observation.

\begin{lemma}\label{q-2}
For any given $\ga\in \bF_q^*$, when $(\al,\be)$ runs through
$\bF_{q}\times \bF_q$, the distribution of $S(\al,\be,\ga)$ is the
same as $S(\al,\be,1)$.
\end{lemma}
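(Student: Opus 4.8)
The plan is to exhibit an explicit bijection of the index set $\bF_q\times\bF_q$ that carries $S(\al,\be,\ga)$ onto $S(\cdot,\cdot,1)$, realised through a multiplicative change of variable inside the defining exponential sum. Since $\ga\in\bF_q^*$, the map $x\mapsto\ga^{-1}x$ is a bijection of $\bF_q$, so substituting $x=\ga^{-1}u$ in (\ref{def S}) leaves the sum unchanged and yields
\[
S(\al,\be,\ga)=\sum_{u\in\bF_q}(-1)^{\Tra_1^n\left(\al\ga^{-(2^{3k}+1)}u^{2^{3k}+1}+\be\ga^{-(2^k+1)}u^{2^k+1}+u\right)}=S\left(\al\ga^{-(2^{3k}+1)},\,\be\ga^{-(2^k+1)},\,1\right).
\]
The key point is that the linear term $\ga x$ becomes $u$, eliminating the dependence of the shift on $\ga$, while the two higher-degree terms simply absorb the constants $\ga^{-(2^{3k}+1)}$ and $\ga^{-(2^k+1)}$ into their coefficients.

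Next I would observe that the assignment $(\al,\be)\mapsto(\al\ga^{-(2^{3k}+1)},\,\be\ga^{-(2^k+1)})$ is a bijection of $\bF_q\times\bF_q$, being the product of two multiplications by the nonzero constants $\ga^{-(2^{3k}+1)}$ and $\ga^{-(2^k+1)}$. Hence as $(\al,\be)$ ranges over $\bF_q\times\bF_q$, so does the transformed pair, and the displayed identity shows that each value $S(\al,\be,\ga)$ equals the value of $S(\cdot,\cdot,1)$ at the corresponding transformed pair, with matching multiplicities. This identifies the two multisets and completes the argument.

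There is essentially no substantive obstacle here: the statement is a pure change-of-variable observation, and the only things to verify are that the substitution is legitimate (which requires only $\ga\neq0$) and that the induced map on coefficients is a bijection (which is immediate). The sole point deserving a moment's care is bookkeeping the exponents under the substitution, namely that $(\ga^{-1}u)^{2^{3k}+1}=\ga^{-(2^{3k}+1)}u^{2^{3k}+1}$ and likewise for the $2^k+1$ term; but this is routine, since these are genuine field powers.
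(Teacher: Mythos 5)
Your proof is correct, and it is exactly the ``easy observation'' the paper has in mind (the paper states this lemma without proof): the substitution $x\mapsto\ga^{-1}x$ gives $S(\al,\be,\ga)=S\bigl(\al\ga^{-(2^{3k}+1)},\,\be\ga^{-(2^k+1)},\,1\bigr)$, and since $(\al,\be)\mapsto\bigl(\al\ga^{-(2^{3k}+1)},\,\be\ga^{-(2^k+1)}\bigr)$ is a bijection of $\bF_q\times\bF_q$, the two multisets coincide. Nothing further is needed.
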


As a consequence of Theorem \ref{value dis T}, Theorem \ref{value
dis S} and Lemma \ref{q-2}, we could give the correlation
distribution amidst the sequences in $\calF$.

\begin{theo}\label{cor dis}
Let $1\leq k\leq n-1$ and
$k\notin\left\{\frac{n}{6},\frac{n}{4},\frac{n}{2},\frac{3n}{4},\frac{5n}{6}\right\}$.
The collection $\calF$ defined in (\ref{def F}) is a family of
$2^{2n}$ ($n/d$ is odd) or $2^{2n}+2^n+1$ ($n/d$ is even) binary
sequences with period $q-1$. Its correlation distribution is given
as follows.

  (i). For the case $n/d$ is odd (that is, $d'=d$),
\begin{center}
\begin{tabular}{|c|c|}
\hline
values  & multiplicity \\[2mm]
\hline $\s{2^{(n+d)/2}-1}$  &
$\frac{(2^{n-d-1}+2^{(n-d-2)/2})(2^{n+2d}-2^{n}-2^{n-d}+2^{2d})(2^{3n}-2^{n+1})}{2^{2d}-1}$
\\[2mm]
\hline $
\s{-2^{(n+d)/2}-1}$&$\frac{(2^{n-d-1}-2^{(n-d-2)/2})(2^{n+2d}-2^{n}-2^{n-d}+2^{2d})(2^{3n}-2^{n+1})}{2^{2d}-1}$
\\[2mm]
\hline
$\s{2^{(n+3d)/2}-1}$&$\frac{(2^{n-3d-1}+2^{(n-3d-2)/2})(2^{n-d}-1)(2^{3n}-2^{n+1})}{2^{2d}-1}$
\\[2mm]
\hline
$\s{-2^{(n+3d)/2}-1}$&$\frac{(2^{n-3d-1}-2^{(n-3d-2)/2})(2^{n-d}-1)(2^{3n}-2^{n+1})}{2^{2d}-1}$
\\[2mm]
\hline
$\s{-1}$&$\scriptstyle{(2^{2n}-2^{2n-d}+2^{2n-4d}+2^n-2^{n-d}-2^{n-3d}+1)(2^{3n}-2^{n+1})+2^n}$
\\[2mm]
\hline $\s{2^n-1}$&$\s{2^{2n}+2^n}$
\\[2mm]
\hline
\end{tabular}
\end{center}

(ii).  For the case $n/d\equiv 0\pmod 4$.

\begin{center}
\begin{tabular}{|c|c|}
\hline
values & multiplicity \\[2mm]
\hline $ \scriptstyle{2^m-1}$&  $
\frac{2^{2n}\left(2^{n+6d}-2^{n+4d}-2^{n+d}+ 2^{m+5d}-
2^{m+4d}+2^{6d}\right)(2^{2n-1}+2^{3m-1}-2^n-2^m+1)}{(2^d+1)(2^{2d}-1)(2^{3d}+1)}
$
\\[2mm]
\hline $\scriptstyle{-2^m-1}$&  $
\frac{2^{2n}(2^{n-1}-2^{m-1})(2^n-2)\left(2^{n+6d}-2^{n+4d}-2^{n+d}+
2^{m+5d}- 2^{m+4d}+2^{6d}\right)}{(2^d+1)(2^{2d}-1)(2^{3d}+1)} $
\\[2mm]
\hline

 $ \scriptstyle{{2}^{m+d}-1}$&  $
\frac{2^{2n}(2^{n-2d-1}+2^{m-d-1})(2^n-2)(2^{n+3d}+2^{n+2d}-2^n-2^{n-d}-2^{n-2d}-
2^{m+3d}+ 2^{m}+2^{3d})}{(2^d+1)^2(2^{2d}-1)} $
\\[2mm]
\hline

 $\scriptstyle{-{2}^{m+d}-1}$&  $
\frac{2^{2n}(2^{n+3d}+2^{n+2d}-2^n-2^{n-d}-2^{n-2d}- 2^{m+3d}+
2^{m}+2^{3d})(2^{2n-2d-1}-2^{3m-d-1}-2^{n-2d}+2^{m-d}+1)}{(2^d+1)^2(2^{2d}-1)}
$
\\[2mm]
\hline

 $ \scriptstyle{{2}^{m+2d}-1}$&$
 \frac{
2^{2n}(2^{m-d}+1)(2^{m+d}+2^m-2^{m-2d}-
2^d)(2^{2n-4d-1}+2^{3m-2d-1}-2^{n-4d}-2^{m-2d}+1)}{(2^d+1)^3(2^{d}-1)}
$
\\[2mm]
\hline $\scriptstyle{-{2}^{m+2d}-1}$&$
 \frac{
2^{2n}(2^{m-d}+\mu)(2^{m+d}+2^m-2^{m-2d}-\mu
2^d)(2^{n-4d-1}-2^{m-2d-1})(2^{n}-2)}{(2^d+1)^3(2^{d}-1)} $
\\[2mm]
\hline
 $ \scriptstyle{{2}^{m+3d}-1}$&$
\frac{2^{2n}(2^{m-2d}-1)(2^{m-d}+1)(2^{n-6d-1}+2^{m-3d-1})(2^{n}-2)}{(2^d+1)(2^{2d}-1)(2^{3d}+1)}
$
\\[2mm]

\hline
 $\scriptstyle{-{2}^{m+3d}-1}$&$
\frac{2^{2n}(2^{m-2d}-1)(2^{m-d}+1)(2^{2n-6d-1}-2^{3m-3d-1}-2^{n-6d}+2^{m-3d}+1)}{(2^d+1)(2^{2d}-1)(2^{3d}+1)}
$
\\[2mm]
\hline $\scriptstyle{-1}$&
$\s{\frac{2^{2n}(2^n-2)\left(2^{2n}+2^{2n-9d}-
2^{3m}+
2^{3m-d}+ 2^{3m-3d}- 2^{3m-5d}- 2^{3m-7d}+
2^{3m-8d}+2^n-2^{n-d}-2^{n-4d}-2^{n-6d}+2^d+1\right)}{2^d+1}}
$ \\[2mm]
\hline $\scriptstyle{2^n-1}$&$\scriptstyle{2^{2n}}$
\\[2mm]
\hline
\end{tabular}
\end{center}

(iii). For the case $n/d\equiv 2\pmod 4$,
\begin{center}
\begin{tabular}{|c|c|}
\hline
values & multiplicity \\[2mm]
\hline $ \scriptstyle{2^m-1}$& $
\frac{2^{2n}(2^{n-1}+2^{m-1})(2^n-2)\left(2^{n+6d}-2^{n+4d}-2^{n+d}-
2^{m+5d}+ 2^{m+4d}+2^{6d}\right)}{(2^d+1)(2^{2d}-1)(2^{3d}+1)} $
\\[2mm]
\hline $\scriptstyle{- 2^m-1}$&  $
\frac{2^{2n}\left(2^{n+6d}-2^{n+4d}-2^{n+d}- 2^{m+5d}+
2^{m+4d}+2^{6d}\right)(2^{2n-1}-2^{3m-1}-2^n+2^m+1)}{(2^d+1)(2^{2d}-1)(2^{3d}+1)}
$
\\[2mm]
\hline

 $ \scriptstyle{{2}^{m+d}-1}$& $
\frac{2^{2n}(2^{n+3d}+2^{n+2d}-2^n-2^{n-d}-2^{n-2d}+ 2^{m+3d}-
2^{m}+2^{3d})(2^{2n-2d-1}+2^{3m-d-1}-2^{n-2d}-2^{m-d}+1)}{(2^d+1)^2(2^{2d}-1)}
$
\\[2mm]
\hline

 $\scriptstyle{-{2}^{m+d}-1}$&  $
\frac{2^{2n}(2^{n-2d-1}-2^{m-d-1})(2^n-2)(2^{n+3d}+2^{n+2d}-2^n-2^{n-d}-2^{n-2d}+
2^{m+3d}- 2^{m}+2^{3d})}{(2^d+1)^2(2^{2d}-1)} $
\\[2mm]
\hline

 $ \scriptstyle{{2}^{m+2d}-1}$&$
 \frac{
2^{2n}(2^{m-d}-1)(2^{m+d}+2^m-2^{m-2d}+
2^d)(2^{n-4d-1}+2^{m-2d-1})(2^{n}-2)}{(2^d+1)^3(2^{d}-1)} $
\\[2mm]
\hline $\scriptstyle{-{2}^{m+2d}-1}$&$
 \frac{
2^{2n}(2^{m-d}-1)(2^{m+d}+2^m-2^{m-2d}+
2^d)(2^{2n-4d-1}-2^{3m-2d-1}-2^{n-4d}+2^{m-2d}+1)}{(2^d+1)^3(2^{d}-1)}
$
\\[2mm]
\hline
 $ \scriptstyle{{2}^{m+3d}-1}$&$
\frac{2^{2n}(2^{m-2d}+1)(2^{m-d}-1)(2^{2n-6d-1}-2^{3m-3d-1}-2^{n-6d}+2^{m-3d}+1)}{(2^d+1)(2^{2d}-1)(2^{3d}+1)}
$
\\[2mm]

\hline
 $\scriptstyle{-{2}^{m+3d}-1}$&$
\frac{2^{2n}(2^{m-2d}+1)(2^{m-d}-1)(2^{n-6d-1}-2^{m-3d-1})(2^{n}-2)}{(2^d+1)(2^{2d}-1)(2^{3d}+1)}
$
\\[2mm]
\hline $\scriptstyle{-1}$&$
\s{\frac{2^{2n}(2^n-2)\left(2^{2n}+2^{2n-9d}+
2^{3m}-
2^{3m-d}- 2^{3m-3d}+ 2^{3m-5d}+ 2^{3m-7d}-
2^{3m-8d}+2^n-2^{n-d}-2^{n-4d}-2^{n-6d}+2^d+1\right)}{2^d+1}}
$ \\[2mm]
\hline $\scriptstyle{2^n-1}$&$\scriptstyle{2^{2n}}$
\\[2mm]
\hline
\end{tabular}
\end{center}

\end{theo}

\begin{proof}
see \textbf{Appendix C.}
\end{proof}
\begin{table}[ht!]\caption{\textbf{
Some Sequence Families with Low Correlation}}
\begin{center}
\begin{tabular}{|c|c|c|c|c|}
\hline
sequence  &$n$& period &family size &$C_{\mathrm{max}}$ \\[2mm]
\hline
Gold\cite{Gold}& odd &$2^n-1$& $2^n+1$ &$2^{(n+1)/2}+1$\\[1mm]
\hline
Rothaus\cite{Roth}&even&$2^n-1$& $2^{2n}+2^n+1$ &$2^{(n+3)/2}+1$\\[1mm]
\hline
Yu and Gong\cite{Yu Gon}, $1< \rho\leq \frac{n-1}{2}$&odd&$2^n-1$& $2^{n\rho}$ &$2^{(n+2\rho-1)/2}+1$\\[1mm]

\hline
Generalized Kasami(Large Set)\cite{Kasa3},\cite{Xia Zen}&even&$2^n-1$& $2^{3n/2}+1$ &$2^{(n+2)/2}$\\[1mm]

\hline
$\calF$ with $d=1$& odd &$2^n-1$& $2^{2n}+2^n+1$ &$2^{(n+3)/2}+1$\\[1mm]
\hline
\end{tabular}
\end{center}
\end{table}

\section{Appendix A}
{\it \textbf{Proof of Lemma \ref{rank}}}:

 For $Y=(y_1,\cdots,y_s)\in \bF_{q_0}^s$,
$y=y_1v_1+\cdots+y_sv_s\in \bF_q$, we know that
\begin{equation}\label{bil form1}
F_{\alpha,\beta}(X+Y)-F_{\alpha,\beta}(X)-F_{\alpha,\beta}(Y)=2XH_{\alpha,\beta}Y^T
\end{equation}
is equal to
\begin{equation}\label{bil form2}
f_{\alpha,\beta}(x+y)-f_{\alpha,\beta}(x)-f_{\alpha,\beta}(y)=\Tra_{d}^n\left(y^{2^{3k}}(\alpha^{2^{3k}}
x^{2^{6k}}+\be^{2^{3k}} x^{2^{4k}}+\beta^{2^{2k}} x^{2^{2k}}+\al
x)\right)
\end{equation}

 Let
\begin{equation}\label{def phi}
\phi_{\al,\be}(x)=\alpha^{2^{3k}} x^{2^{6k}}+\be^{2^{3k}}
x^{2^{4k}}+\beta^{2^{2k}} x^{2^{2k}}+\al x.
\end{equation}
 Therefore,
\[{\setlength\arraycolsep{2pt}
\begin{array}{lcl}
r_{\al,\be}=r& \Leftrightarrow&\text{the number of common solutions of}\;XH_{\alpha,\beta}Y^T=0\;\text{for all}\;Y\in \bF_{q_0}^s\;\text{is}\; q_0^{s-r}, \\[2mm]
& \Leftrightarrow&\text{the number of common solutions of}\;\Tra_{d}^n\left(y^{2^{3k}}\cdot\phi_{\al,\be}(x)\right)=0\;\text{for all}\;y\in \bF_q\;\text{is}\; q_0^{s-r}, \\[2mm]
&\Leftrightarrow&\phi_{\al,\be}(x)=0\;\text{has}\; q_0^{s-r}\;
\text{solutions in}\; \bF_q.
\end{array}
}
\]

For a fixed algebraic closure $\bF_{2^\infty}$ of $\bF_2$, since the
degree of $2^{2k}$-linearized polynomial $\phi_{\al,\be}(x)$ is
$2^{6k}$ and $\phi_{\al,\be}(x)=0$ has no multiple roots in
$\bF_{2^{\infty}}$ (this fact follows from
$\phi'_{\al,\be}(x)=\al\in \bF_q^*$),  then the zeroes of
$\phi_{\al,\be}(x)$ in $\bF_{2^\infty}$, say $V$, form an
$\bF_{2^{2k}}$-vector space of dimension 3. Note that
$\gcd(n,2k)=d'$. Then $V\cap \bF_{2^n}$ is a vector space on
$\bF_{2^{\gcd(n,2k)}}=\bF_{2^{d'}}$ of dimension less that or equal
to 3 since any elements in $\bF_{2^n}$ which are linear independent
over $\bF_{2^{d'}}$ are also linear independent over
$\bF_{2^{2k}}$(see \cite{Trac}, Lemma 4).

\begin{itemize}
  \item [(i).] For the case $n/d$ is odd, then $d'=d$ and $r_{\al,\be}\geq
  s-3$. Note that $r_{\al,\be}$ must be even. Hence the possible
  values of $r_{\al,\be}$ are $s-1$ and $s-3$.
  \item [(ii).] For the case $n/d$ is even, then $d'=2d$ and the possible
  values of $r_{\al,\be}$ are $s$, $s-2$, $s-4$ and $s-6$.
\end{itemize}

 $\square$

 {\it\textbf{Proof of Lemma \ref{moment}}}: (i). We observe that
\[ {\setlength\arraycolsep{2pt}
\begin{array}{ll}
&\sum\limits_{\al,\be\in
\bF_q}T(\al,\be)=\sum\limits_{\al,\be\in\bF_q}\sum\limits_{x\in
\bF_q}(-1)^{\Tra_1^n(\al x^{2^{3k}+1}+\be x^{2^k+1})}\\[3mm]
&\quad\quad=\sum\limits_{x\in\bF_q}\sum\limits_{\al\in
\bF_{q}}(-1)^{\Tra_1^n(\al x^{2^{3k}+1})}\sum\limits_{\be\in
\bF_q}(-1)^{\Tra_1^n(\be
x^{2^k+1})}=q\cdot\sum\limits_{\stackrel{\al\in
\bF_{q}}{x=0}}(-1)^{\Tra_1^n(\al x^{2^{3k}+1})}=2^{2n}.
\end{array}
}
\]
(ii). We can calculate
\[
{ \setlength\arraycolsep{2pt}
\begin{array}{lll}
\sum\limits_{\al,\be\in\bF_q}T(\al,\be)^2&=&\sum\limits_{x,y\in
\bF_q}\sum\limits_{\al\in
\bF_{q}}(-1)^{\Tra_1^n\left(\al\left(x^{2^{3k}+1}+y^{2^{3k}+1}\right)\right)}\sum\limits_{\be\in
\bF_q}(-1)^{\Tra_1^n\left(\be\left(x^{2^k+1}+y^{2^k+1}\right)\right)}\\[2mm]
&=&M_2\cdot 2^{2n}\end{array}}
\] where
$M_2$ is the number of solutions to the equation
\begin{eqnarray}\label{def 2nd}
\left\{
\begin{array}{ll}
 x^{2^{3k}+1}+y^{2^{3k}+1}=0&\\[2mm]
 x^{2^k+1}+y^{2^k+1}=0&
 \end{array}
 \right.
\end{eqnarray}

If $xy=0$ satisfying (\ref{def 2nd}), then $x=y=0$. Otherwise
$(x/y)^{2^{3k}+1}=(x/y)^{2^k+1}=1$ which yields that
$(x/y)^{2^{2k}-1}=1$. Denote by $x=ty$.  Since $\gcd(2k,n)=d'$, then
$t\in \bF_{2^{d'}}^*$.
\begin{itemize}
  \item If $d'=d$, then $t\in \bF_{2^d}^*$ and (\ref{def 2nd}) is
  equivalent to $x^2+y^2=0$, that is, $x=y$ which has $2^n-1$ solutions in $\bF_q^*$.  Therefore
  \[
  M_2=1+(2^n-1)=2^n.
  \]
  \item If $d'=2d$, then by (\ref{rel d d'}) we get (\ref{def 2nd}) is equivalent to
  $x^{2^d+1}+y^{2^d+1}=0$. Then we have $t^{2^d+1}=1$ which has
  $2^d+1$ solutions in $\bF_{2^{2d}}^*$. Therefore
  \[M_2=(2^d+1)(2^n-1)+1=2^{n+d}+2^n-2^d.\]
\end{itemize}

(iii). We have
\[\sum\limits_{\al,\be\in\bF_q}T(\al,\be)^3=M_3\cdot 2^{2n}\quad \text{where}\]
\begin{eqnarray}\label{num M}
&&M_3=\#\left\{(x,y,z)\in
\bF_q^3\left|x^{2^{3k}+1}+y^{2^{3k}+1}+z^{2^{3k}+1}=0,x^{2^k+1}+y^{2^k+1}+z^{2^k+1}=0\right.\right\}\nonumber\\[2mm]
&&\quad\;=M_2+T'\cdot(2^n-1)
\end{eqnarray}
and $T'$ is the number of $\bF_q$-solutions of
\begin{equation}\label{def T'}
\left\{
\begin{array}{ll}
x^{2^{3k}+1}+y^{2^{3k}+1}+1=0&\\[2mm]
x^{2^k+1}+y^{2^k+1}+1=0.&
\end{array}
\right.
\end{equation}
Canceling $y$ we have
$\left(x^{2^{3k}+1}+1\right)^{2^k+1}=\left(x^{2^{k}+1}+1\right)^{2^{3k}+1}$
which is equivalent to
\[(x^{2^{4k}}+x)(x^{2^k}+x^{2^{3k}})=0.\]
Therefore $x^{2^{4k}}=x$ or $x^{2^k}=x^{2^{3k}}$.
\begin{itemize}
  \item If $n/d\equiv 2\pmod 4$, then $x\in \bF_{2^{2d}}$ and symmetrically $y\in \bF_{2^{2d}}$. Hence
   (\ref{def T'}) is equivalent
  to $x^{2^d+1}+y^{2^d+1}+1=0$ which is the well-known Hermitian curve on $\bF_{2^{2d}}$.
  If follows that $T'=2^{3d}-2^d$.
  \item If $n/d\equiv 0\pmod 4$, then $x\in \bF_{2^{4d}}$ and hence $y\in \bF_{2^{4d}}$. In this case
  $\left(x^{2^k+1}+y^{2^k+1}+1\right)^{2^{3k}}=x^{2^{3k}+1}+y^{2^{3k}+1}+1$ and then
   (\ref{def T'}) is
  equivalent to $x^{2^d+1}+y^{2^d+1}+1=0$ which is a minimal curve on $\bF_{2^{4d}}$ with genus $2^{d-1}(2^d-1)$.
  Hence
  $T'=2^{4d}+1-2^d(2^d-1)2^{2d}-(2^d+1)=2^{3d}-2^d$ (Note that there are exactly $2^d+1$ rational infinite places on
  the curve $x^{2^d+1}+y^{2^d+1}+1=0$ defined on $\bF_q$).
\end{itemize}
Anyway,
$M_3=(2^{n+d}+2^n-2^d)+(2^n-1)(2^{3d}-2^d)=2^{n+3d}+2^n-2^{3d}$.

(iv). We can calculate
\[\sum\limits_{\al,\be,\ga\in\bF_q}S(\al,\be,\ga)^3=L_3\cdot 2^{3n}\quad \text{where}\;L_3\;\text{is the number of solutions to}\]
\begin{equation}\label{num L3}
x^{2^{3k}+1}+y^{2^{3k}+1}+z^{2^{3k}+1}=x^{2^k+1}+y^{2^k+1}+z^{2^k+1}=x+y+z=0
\end{equation}
\begin{itemize}
  \item If $xyz=0$, we may assume $x=0$ and then $y=z$ which gives $2^n$ solutions. So is $y=0$ or $z=0$. Note that $x=y=z=0$ has been counted 3 times. Hence
   there are exactly $3\cdot 2^n-2$ solutions to (\ref{num L3}) satisfying $xyz=0$.
  \item If $xyz=0$, then $L_3$ is equal to $2^n-1$ times the number of solutions to
  \begin{equation}\label{red s3}
   x^{2^{3k}+1}+y^{2^{3k}+1}+1=x^{2^k+1}+y^{2^k+1}+1=x+y+1=0
  \end{equation}
  with $xy\neq 0$. Then we have $x^{2^{3k}+1}+(x+1)^{2^{3k}}+1=x^{2^{k}+1}+(x+1)^{2^k}+1=0$ which is equivalent to $x^{2^{3k}}=x^{2^k}=x$.
  Hence $x\in \bF_{2^{3k}}\cap \bF_{2^{k}}\cap \bF_{2^{n}}^*=\bF_{2^{d}}^*$. Since $y\neq 0$, then $x\neq 1$. Therefore (\ref{red s3}) has
  $2^d-2$ solutions with $xy\neq 0$ and then $L_3=3\cdot 2^n-2+(2^d-2)(2^n-1)=2^{n+d}+2^n-2^d$.

\end{itemize}

 {\it\textbf{Proof of Lemma \ref{Artin}}}:
We get that
\[
\begin{array}{rcl}
qN&=&\sum\limits_{\om\in \bF_q}\sum\limits_{x,y\in
\bF_q}(-1)^{\Tra_1^n\left(\om\left(\al x^{2^{3k}+1}+\be x^{2^k+1}+y^{2^d}+y\right)\right)}\\[2mm]
&=&q^2+\sum\limits_{\om\in \bF_q^*}\sum\limits_{x\in
\bF_q}(-1)^{\Tra_1^n\left(\om\left(\al x^{2^{3k}+1}+\be
x^{2^k+1}\right)\right)} \sum\limits_{y\in
\bF_q}(-1)^{\Tra_1^n\left(y^{2^d}\left(\om^{2^d}+\om\right)\right)}\\[2mm]
&=&q^2+q\sum\limits_{\om\in \bF_{q_0}^*}\sum\limits_{x\in
\bF_q}(-1)^{\Tra_1^n\left(\om\left(\al x^{2^{3k}+1}+\be
x^{2^k+1}\right)\right)}\\[2mm]
&=&q^2+q\sum\limits_{\om\in \bF_{q_0}^*}T(\om\al,\om\be)
\end{array}
\]
where the 3-rd equality follows from that the inner sum is zero
unless $\om^{2^d}+\om=0$, i.e. $\om\in \bF_{q_0}$.

For any $\om\in \bF_{q_0}^*$,  choose $t\in \bF_{2^{2d}}$ such that
$t^{2^d+1}=\om$. Then $t^{2^{3k}+1}=t^{2^k+1}=\om$. As a consequence
$T({\om\al,\om\be})=T(\al,\be).$ Hence $N=2^n+(2^d-1)\cdot
T(\al,\be)$. $\square$

\section{Appendix B}

{\it\textbf{Proof of Theorem \ref{value dis T}}} (ii):

 In the case $n/d$ is even
(that is, $d'=2d$), we have $r_{\al,\be}=s,s-2,s-4$
or $s-6$ for $(\al,\be)\neq (0,0)$. According to Lemma \ref{qua} and
Lemma \ref{reduce num}, we get that if $r_{\al,\be}=s-i$, then
$T(\al,\be)=(-1)^{m/d+\frac{i}{2}}2^{\frac{n+id}{2}}$.

Combining Lemma \ref{rank} and Lemma \ref{moment} we have
\begin{equation}\label{par sum02}
 n_{0}+n_{2}+n_{4}+n_{6}=2^{2n}-1
 \end{equation}

\begin{equation}\label{par sum12}
 n_{0}-2^d\cdot n_{2}+2^{2d}\cdot n_{4}-2^{3d}\cdot n_6=(-1)^{m/d} 2^m(2^{n}-1)
 \end{equation}

 \begin{equation}\label{par sum22}
 n_{0}+2^{2d}\cdot n_{2}+2^{4d}\cdot n_{4}+2^{6d}\cdot n_{6}=2^n(2^d+1)(2^n-1).
 \end{equation}

  \begin{equation}\label{par sum32}
 n_{0}-2^{3d}\cdot n_{2}+2^{6d}\cdot n_{4}-2^{9d}\cdot n_{6}=(-1)^{m/d} 2^{m+3d}(2^{n}-1).
 \end{equation}
Solving the system of equations consisting of (\ref{par
sum02})--(\ref{par sum32}) yields the result.
$\square$

\section{Appendix C}

{\it\textbf{Proof of Theorem \ref{value dis S}}} : Define
\[\Xi=\left\{(\al,\be,\ga)\in \bF_q^3\left|S(\al,\be,\ga)=0 \right.\right\}\]
and $\xi=\big{|}\Xi\big{|}$.

 Recall $n_i,H_{\al,\be},,r_{\al,\be},A_{\ga}$ in Section 1 and
$N_{i,\veps},n_{i,\veps,}$ in the proof of Lemma \ref{rank}.
\begin{itemize}
  \item In the case $n/d$ is odd, from Lemma \ref{det gamma}, Lemma \ref{rank}(i) and Lemma \ref{moment}(iv) we have
\begin{equation}\label{n/d odd 1}
n_1+n_3=2^{2n}-1
\end{equation}

\begin{equation}\label{n/d odd 2}
n_1+2^{2d}\cdot n_3=2^{n-d}(2^d+1)(2^n-1).
\end{equation}

These two equations yield
\[n_1=\frac{(2^{n}-1)(2^{n+2d}-2^n-2^{n-d}+2^{2d})}{2^{2d}-1},\quad n_3=\frac{(2^{n-d}-1)(2^n-1)}{2^{2d}-1}.\]

From Lemma \ref{det gamma} we get that
\begin{equation}\label{value xi1}
{\setlength\arraycolsep{2pt}
\begin{array}{lll}
\xi&=2^{n}-1+(2^{n}-2^{n-d})n_{1}+(2^{n}-2^{n-3d})n_{3}&\\[2mm]
&=(2^n-1)(2^{2n}-2^{2n-d}+2^{2n-4d}+2^n-2^{n-d}-2^{n-3d}+1)
\end{array}
}
\end{equation}
  \item In the case $n/d$ is even, from
Lemma \ref{det gamma}, Lemma \ref{rank}(ii), Lemma \ref{reduce num} and Lemma \ref{value dis T}(ii) we get that
\begin{equation}\label{value xi2}
{\setlength\arraycolsep{2pt}
\begin{array}{lll}
\xi&=2^{n}-1+(2^{n}-2^{n-2d})n_{2,1}+(2^{n}-2^{n-4d})n_{4,-1}+(2^n-2^{n-6d})n_{6,1}&\\[2mm]
&\begin{array}{ll} =(2^n-1)\left[1+(2^{2n}+2^{2n-9d}-\veps
2^{3m}+\veps
2^{3m-d}+\veps 2^{3m-3d}\right.&\\[1mm]
\quad\left.-\veps 2^{3m-5d}- \veps 2^{3m-7d}+\veps
2^{3m-8d}+2^n-2^{n-d}-2^{n-4d}-2^{n-6d})/(2^d+1)\right]&
\end{array}
\end{array}
}
\end{equation}
\end{itemize}
By Lemma \ref{det gamma} we get the value distribution of $S(\al,\be,\ga)$ and then the weight distribution of $\cC_2$.
 $\square$

{\it\textbf{Proof of Theorem \ref{cor dis}}} : For any possible
value $\ka$ and $1\leq i, j\leq 2$, define $M_{\ka}(\calF_i,
\calF_j)$ to be the frequency of $\ka$ in correlation values
 between any two sequences in $\calF_i$ and $\calF_j$ by any shift, respectively.  Then the correlation distribution of sequences in
 $\calF$
could be obtained if we can calculate all of the $M_{\ka}(\calF_i,
\calF_j)$ . We will deal with it case by case.
\begin{itemize}
  \item The correlation function between
$a_{\al_1,\be_1}$ and $a_{\al_2,\be_2}$ by a shift $\tau$ ($0\leq
\tau\leq q-2$) is
\[
\begin{array}{ll}
&C_{(\al_1,\be_1),(\al_2,\be_2)}(\tau)=\sum\limits_{\lambda=0}^{q-2}(-1)^{a_{\al_1,\be_1}({\lambda})-
a_{\al_2,\be_2}({\lambda+\tau})}\\[2mm]
&\qquad =\sum\limits_{\lambda=0}^{q-2}(-1)^{\Tra_1^n(\al_1
  \pi^{\lambda(2^{3k}+1)}+\be_1
  \pi^{\lambda(2^k+1)}+\pi^{\lambda})-\Tra_1^n(\al_2 \pi^{(\lambda+\tau)(2^{3k}+1)}+\be
  \pi^{(\lambda+\tau)(2^k+1)}+\pi^{\lambda+\tau})}\\[2mm]
  &\qquad = S(\al',\be',\ga')-1
  \end{array}
\]
 where
 \begin{equation}\label{coe cor}
 \al'=\al_1-\al_2 \pi^{\tau(2^{3k}+1)},\quad
 \be'=\be_1-\be_2\pi^{\tau(2^k+1)},\quad \ga'=1-\pi^{\tau}.
 \end{equation}

 Fix $(\al_2,\be_2)\in
\bF_q\times \bF_q$, when $(\al_1,\be_1)$ runs through $\bF_q\times
\bF_q$ and $\tau$ takes values from $0$ to $q-2$, $(\al',\be',\ga')$
runs through $\bF_q\times
\bF_q\times\left\{\bF_{q}\big{\backslash}\{1\}\right\}$ exactly one
time.

For any possible value $\kappa$ of $S(\al,\be,\ga)$, define

\begin{equation}\label{def sk}
s_{\kappa}=\#\left\{(\al,\be,\ga)\in \bF_q\times \bF_q\times
\bF_q\,\displaystyle{|}\,S(\al,\be,\ga)=\kappa+1\right\}
\end{equation}

\begin{equation}\label{def sk1}s^1_{\kappa}=\#\left\{(\al,\be)\in \bF_q\times
\bF_q\,\big{|}\,S(\al,\be,1)=\kappa+1\right\}
\end{equation}
 and
\begin{equation}\label{def tk}t_{\kappa}=\#\left\{(\al,\be)\in \bF_q\times
\bF_q\,\displaystyle{|}\,T(\al,\be)=\kappa+1\right\}.
\end{equation}

By Lemma \ref{q-2} we have
\begin{equation}\label{rel sk1 sk tk}
s_{\kappa}^1=\frac{1}{2^n-1}\times (s_{\kappa}-t_{\kappa}).
\end{equation}

Hence we get
\[M_{\kappa}(\calF_1,\calF_1)=2^{2n}\cdot \left(s_{\kappa}-s_{\kappa}^1\right)=2^{2n}\cdot\left(\frac{2^n-2}{2^n-1}\cdot s_{\kappa}+\frac{1}{2^n-1}\cdot t_{\kappa}\right).\]

  \item For the case $n/d$ is odd.  The cross correlation function between
$a_{\al_1,\be_1}$ and $a_{\al_2}$ by a shift $\tau$ ($0\leq \tau\leq
q-2$) is
\[
\begin{array}{ll}
&C_{(\al_1,\be_1),\al_2}(\tau)=\sum\limits_{\lambda=0}^{q-2}(-1)^{a_{\al_1,\be_1}({\lambda})-
a_{\al_2}({\lambda+\tau})}\\[2mm]
&\qquad =\sum\limits_{\lambda=0}^{q-2}(-1)^{\Tra_1^n(\al_1
\pi^{\lambda(2^{3k}+1)}+\be_1
  \pi^{\lambda(2^k+1)}+\pi^{\lambda})-\Tra_1^n(\al_2\pi^{(\lambda+\tau)(2^{3k}+1)}+
  \pi^{(\lambda+\tau)(2^k+1)})}\\[2mm]
  &\qquad = S(\al',\be',1)-1
  \end{array}
\]
 where
 $
 \al'=\al_1- \al_2\pi^{\tau(2^{3k}+1)},
 \be'=\be_1-\pi^{\tau(2^k+1)}.
$

 Fix $0\leq \tau\leq q-2$ and $\al_2\in \bF_{q}$, when $(\al_1,\be_1)$ runs through
$\bF_q\times \bF_q$, $(\al',\be')$ runs through $\bF_q\times \bF_q$
exactly one time.

  The cross correlation function between
$a_{\al_1,\be_1}$ and $a$ by a shift $\tau$ ($0\leq \tau\leq q-2$)
is
\[
\begin{array}{ll}
&C_{(\al_1,\be_1)}(\tau)=\sum\limits_{\lambda=0}^{q-2}(-1)^{a_{\al_1,\be_1}({\lambda})-
a({\lambda+\tau})}\\[2mm]
&\qquad =\sum\limits_{\lambda=0}^{q-2}(-1)^{\Tra_1^n(\al_1
\pi^{\lambda(2^{3k}+1)}+\be_1
  \pi^{\lambda(2^k+1)}+\pi^{\lambda})-\Tra_1^n(\pi^{(\lambda+\tau)(2^{3k}+1)})}\\[2mm]
  &\qquad = S(\al',\be_1,1)-1
  \end{array}
\]
 where
 $
 \al'=\al_1- \pi^{\tau(2^{3k}+1)}.
$

 Fix $0\leq \tau\leq q-2$, when $(\al_1,\be_1)$ runs through
$\bF_q\times \bF_q$, $(\al',\be_1)$ runs through $\bF_q\times \bF_q$
exactly one time.

In total, by (\ref{def sk}) and (\ref{def sk1}) we get
\[M_{\kappa}(\calF_1,\calF_2)=M_{\kappa}(\calF_2,\calF_1)=2^n(2^n-1)\cdot s_{\kappa}^1+(2^n-1)\cdot s_{\kappa}^1=(2^n+1)(s_{\ka}-t_{\ka}).\]

  \item For the case $n/d$ is odd.  The cross correlation function between
$a_{\al_1}$ and $a_{\al_2}$ by a shift $\tau$ ($0\leq \tau\leq q-2$)
is
\[
\begin{array}{ll}
&C_{\al_1,\al_2}(\tau)=\sum\limits_{\lambda=0}^{q-2}(-1)^{a_{\al_1}({\lambda})-
a_{\al_2}({\lambda+\tau})}\\[2mm]
&\qquad
=\sum\limits_{\lambda=0}^{q-2}(-1)^{\Tra_1^n(\al_1\pi^{\lambda(2^{3k}+1)}+
  \pi^{\lambda(2^k+1)})-\Tra_1^n(\al_2\pi^{(\lambda+\tau)(2^{3k}+1)}+
  \pi^{(\lambda+\tau)(2^k+1)})}\\[2mm]
  &\qquad = T(\al',\be')-1
  \end{array}
\]
 where
 $
 \al'=\al_1- \al_2\pi^{\tau(2^{3k}+1)},
 \be'=1-\pi^{\tau(2^k+1)}.
$

When $(\al_1,\al_2)$ runs through $\bF_{q}\times \bF_q$ and $\tau$
takes value from $0$ to $q-2$, $(\al',\be')$ runs through
$\bF_q\times \bF_q\large{\backslash}\{1\}$ exactly $ q$ times.

The cross correlation function between $a_{\al_1}$ and $a$ by a
shift $\tau$ ($0\leq \tau\leq q-2$) is
\[
\begin{array}{ll}
&C_{\al_1}(\tau)=\sum\limits_{\lambda=0}^{q-2}(-1)^{a_{\al_1}({\lambda})-
a({\lambda+\tau})}\\[2mm]
&\qquad
=\sum\limits_{\lambda=0}^{q-2}(-1)^{\Tra_1^n(\al_1\pi^{\lambda(2^{3k}+1)}+
  \pi^{\lambda(2^k+1)})-\Tra_1^n(\pi^{(\lambda+\tau)(2^{3k}+1)})}\\[2mm]
  &\qquad = T(\al',1)-1
  \end{array}
\]
 where
 $
 \al'=\al_1- \pi^{\tau(2^{3k}+1)}.
$ For fixed $\tau$, $0\leq \tau\leq q-2$, when $\al_1$ runs through
$\bF_{q}$, then so is $\al'$.

The auto-correlation function of $a$ by a shift $\tau$ ($0\leq
\tau\leq q-2$) is
\[
\begin{array}{ll}
&C(\tau)=\sum\limits_{\lambda=0}^{q-2}(-1)^{a({\lambda})-
a({\lambda+\tau})}\\[2mm]
&\qquad =\sum\limits_{\lambda=0}^{q-2}(-1)^{\Tra_1^n(\pi^{\lambda(2^{3k}+1)})-\Tra_1^n(\pi^{(\lambda+\tau)(2^{3k}+1)})}\\[2mm]
  &\qquad = T(1- \pi^{\tau(2^{3k}+1)},0)-1.
  \end{array}
\]
When $\tau$ takes values from $0$ to $q-2$, then $1-
\pi^{\tau(2^{3k}+1)}$ runs through $\bF_q\backslash\{1\}$.

 Define
\begin{equation}\label{def tk0}
\displaystyle{t_{\ka}^0=\#\left\{\be\in
\bF_q\large{|}\;T(\al,0)=\ka+1\right\}}
\end{equation}
and
\begin{equation}\label{def tk1}
\displaystyle{t_{\ka}^1=\#\left\{\be\in
\bF_q\large{|}\;T(\al,1)=\ka+1\right\}}.
\end{equation}
Then a routine calculation shows that

\[
t_{\ka}^0=\left\{\begin{array}{cl}1, & \ka=2^n-1\\[2mm]
2^n-1,&\ka=-1.\end{array}\right.
\]
By Lemma \ref{red s3} we have
\[t_{\ka}^1=\frac{1}{2^n-1}(t_{\ka}-t_{\ka}^0).\]
By Inclusion-Exclusion principle, Lemma \ref{q-2},  (\ref{def tk})
and (\ref{def tk0}) we get
\[
M_{\kappa}(\calF_2,\calF_2)=\left(\frac{2^n-2}{2^n-1}(t_{\ka}-t_{\ka}^0)+t_{\ka}^0\right)\cdot
2^n.
\]

In total, sum up all the $M_{\ka}(\calF_i,\calF_i)$ for $1\leq
i,j\leq 2$ and the result follows from Theorem \ref{value dis T} and
Theorem \ref{value dis S}.
\end{itemize} $\square$

\section{Conclusion and Further Study}

\quad In this paper we have studied the exponential sums
$T(\al,\be)$ and $S(\al,\be,\ga)$ with $\al,\be,\ga\in \bF_{2^n}$.
After giving the value distribution of $T(\al,\be)$ and $S(\al,\be,\ga)$, we determine the weight
distributions of the cyclic codes $\cC_1$ and $\cC_2$ and the
correlation distribution among a family of sequences.

In particular, for the case $f(x)=x^{2^{lk}+1}+x^{2^k+1}$ with $l\geq 4$, we could get the possible values of
$\sum\limits_{x\in \bF_q}(-1)^{\Tr_1^n(f(x))}$ and $\sum\limits_{x\in \bF_q}(-1)^{\Tr_1^n(f(x)+\ga x)}$.
But the first three moment identities developed in Lemma \ref{moment}
is not enough to determine the value distribution. However, we could get the possible weights of the corresponding
cyclic codes. New machinery and technique should be invented to attack this problem.

In \cite{Joh Hel}, the cross correlation distribution between two binary m-sequences $\{s_t\}$ and $\{s_{dt}\}$ with $d=\frac{2^{2k}+1}{2^k+1}$ for
$m$ odd and $k=1$ has been determined. It seems that our techniques may
be useful for the case $d=\frac{2^{lk}+1}{2^k+1}$ for a quite general $l$ and $k$.

It turns out that the evaluation of $T(\al,\be)$ and $S(\al,\be,\ga)$ are very similar for binary and odd characteristic
when $n/d$ is even. But the result is quite different when $n/d$ is odd.
\section{Acknowledgements}
\quad The authors will thank the anonymous referees for their
helpful comments.

\end{document}